\newtheorem{lemma}{Lemma}
\newtheorem{definition}{Definition}
\newcommand{\mb}[1]{\boldsymbol{#1}}
\newcommand{\vx}{\mathbf{x}}
\newcommand{\vy}{\mathbf{y}}
\newcommand{\vd}{\mathbf{d}}
\newcommand{\vq}{\mathbf{q}}
\newcommand{\mD}{\mathbf{D}}
\newcommand{\mQ}{\mathbf{Q}}
\newcommand{\mA}{\mathbf{A}}
\newcommand{\mat}[2]{\left(\begin{array}{#1} #2\end{array}\right)}
\newcommand{\R}{\mathbb{R}}
\newcolumntype{L}[1]{>{\raggedright\arraybackslash}p{#1}}
\newcolumntype{C}[1]{>{\centering\arraybackslash}p{#1}}
\newcolumntype{R}[1]{>{\raggedleft\arraybackslash}p{#1}}
\begin{document}
\title{Towards Lossless Token Pruning in Late-Interaction Retrieval Models}

\author{Yuxuan Zong}
\orcid{0009-0002-0376-1369}
\affiliation{%
  \institution{Sorbonne Université, CNRS, ISIR}
  \city{Paris}
  \country{France}
}
\email{yuxuan.zong@isir.upmc.fr	}

\author{Benjamin Piwowarski}
\affiliation{%
  \institution{CNRS, Sorbonne Université, ISIR}
  \city{Paris}
  \country{France}
}
\email{benjamin.piwowarski@cnrs.fr}

\renewcommand{\shortauthors}{Zong and Piwowarski}

\begin{abstract}

Late interaction neural IR models like ColBERT offer a competitive effectiveness-efficiency trade-off across many benchmarks. However, they require a huge memory space to store the contextual representation for all the document tokens. 
Some works have proposed using either heuristics or statistical-based techniques to prune tokens from each document.
This however doesn't guarantee that the removed tokens have no impact on the retrieval score. Our work uses a principled approach to define how to prune tokens without impacting the score between a document and a query. We introduce three regularization losses, that induce a solution with high pruning ratios, as well as two pruning strategies. We study them experimentally (in and out-domain), showing that we can preserve ColBERT's performance while using only 30\% of the tokens.
\end{abstract}

\begin{CCSXML}
<ccs2012>
<concept>
<concept_id>10002951.10003317.10003318</concept_id>
<concept_desc>Information systems~Document representation</concept_desc>
<concept_significance>500</concept_significance>
</concept>
<concept>
<concept_id>10002951.10003317.10003338</concept_id>
<concept_desc>Information systems~Retrieval models and ranking</concept_desc>
<concept_significance>500</concept_significance>
</concept>
<concept>
<concept_id>10002951.10003317</concept_id>
<concept_desc>Information systems~Information retrieval</concept_desc>
<concept_significance>500</concept_significance>
</concept>
</ccs2012>
\end{CCSXML}

\ccsdesc[500]{Information systems~Document representation}
\ccsdesc[500]{Information systems~Retrieval models and ranking}
\ccsdesc[500]{Information systems~Information retrieval}

\keywords{
Information Retrieval, Dense Retrieval, Multi-vector Retrieval, Efficiency-Effectiveness Trade-off, 
}


\maketitle

\section{Introduction} \label{sec:intro}
Pre-trained contextualized language models based on the Transformer architecture, such as BERT, significantly improved retrieval effectiveness over the previous state-of-the-art methods in information retrieval (IR) like e.g. BM25~\cite{BM25}. 
Among these models, dual encoders such as sparse~\cite{SPLADEv2, COIL} or dense~\cite{DPR} models, have the best efficiency while cross-encoders, e.g. MonoBERT~\cite{MonoBERT}, have the best effectiveness.

Another category of models, called \textit{late interaction} models,  like ColBERT~\cite{ColBERT,ColBERTv2}, offer a good efficiency-effectiveness trade-off, as they are halfway between dual encoders and cross-encoders. Specifically, unlike dual encoders, late interaction models use embeddings at the token level rather than to represent the entire query or document.
In ColBERT, the score of a document for one query token is defined as the maximum of the inner product over all the document tokens representations with the query token representation.
This token-level granularity allows for finer semantic matching and improved retrieval accuracy. Moreover, compared to the cross-encoder mechanism like MonoBERT~\cite{MonoBERT}, this "late interaction" mechanism provides a way to build an index of a whole dataset by storing the document token representations, or \textit{ document vectors} in short.

\begin{figure}
    \centering
    \includegraphics[width=0.8\linewidth]{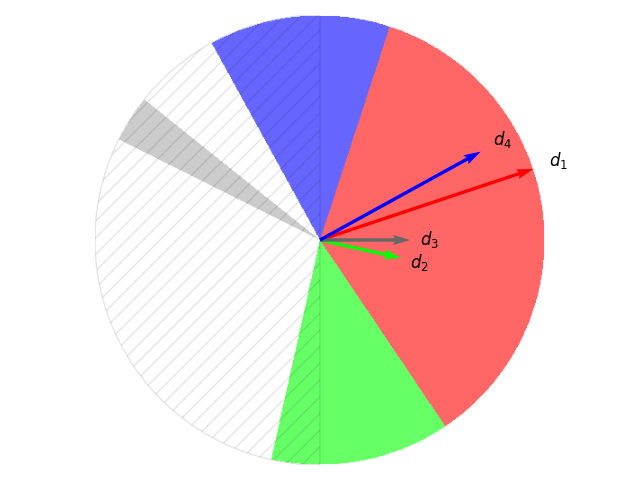}
    \caption{Illustration of the concept of dominance. In this example, $\vd_3$ is dominated by $\vd_1$ and $\vd_2$. $\vd_3$ can be removed from the document representation without changing a modified ColBERT scoring function (see Section~\ref{sec:lp}). Despite its low norm, $\vd_2$ is kept since it brings new information about the relevance of the document. The hashed area corresponds to the half-space where the inner product of any vector with $\vd_3$ is negative. The gray area is discussed in Section~\ref{sec:adapting-colbert}.}
    \label{fig:domi}
\end{figure}

However, storing the vectors of the documents requires a large index space compared to dual encoding approaches. For example, for MS MARCO v1~\cite{MSMARCO}, composed of 8.8M passages, the ColBERT index takes 142 GiB. PLAID~\cite{PLAID}, which is a highly optimized version of ColBERTv2, requires more than 20 GiB of space. This is in contrast to the 13 GiB needed to store the 768-dimensional vectors of a dense model~(without any compression, using FP16). Furthermore, a large index leads to expensive computational costs; as in ColBERT's retrieval pipeline, the retrieval cost scales linearly with the query and document's length. 

To reduce the index size, some works have proposed to prune tokens from the document.
%
Some existing works~\cite{colbert_pruning, static_pruning, late_interaction_pruning} propose to use static methods based on heuristic or statistical approaches, but they degrade ColBERT performance with high pruning ratios, greater than 50\%.
Some other works~\cite{AligneR, ColBERTer} learn to discard some tokens using a gating mechanism, and hence aim at learning sparse (at token level) document representation. These methods remove some of the unimportant tokens, but do not consider how to remove tokens while having a minimum impact on the ColBERT score.

This can be a problem, as illustrated in Figure~\ref{fig:domi}, where the document vector $\vd_2$, while having the lowest norm, can still match a query token -- i.e., the inner product with a query vector located in the green area will be maximized by $\vd_2$. In the same document, $\vd_3$ could, however, be pruned without changing the score of the document, since either $\vd_1$, $\vd_2$ or $\vd_4$ would match any query token located in the blue, red or green areas.


In this paper, we first formally define this \emph{lossless} pruning strategy and show that it is equivalent to a linear programming problem. %
We then propose different training methodologies and approximate pruning strategies that can be used in practice.
Finally, through our experiments, we analyze the performance of our proposed approach considering different pruning strategies and regularizations, leading to the removal of 70\% tokens and a performance drop of less than 1.5\% on in-domain datasets. For the out-of-domain evaluation task, we also end up with a 65\% token removal with less than a 3\% drop in the performance on BEIR task (1.1\% of performance drop in the LoTTE search). We release the experimental code connected to this paper \footnote{\href{https://github.com/yzong12138/MVDR_pruning}{https://github.com/yzong12138/MVDR\_pruning}}.


In the following, we report the research questions that are tackled in this paper, as well as our different contributions.

\paragraph{\textbf{RQ1}} \emph{Is it possible to define a late interaction model and a pruning algorithm that guarantees that the score of a document does not change whatever the query?}
In Section~\ref{sec:dominance}, we formally define the concept of ``dominance'' whereby a document vector can be removed from a document's representation without impacting a late interaction scoring function.  More precisely, in Section~\ref{sec:adapting-colbert}  we discuss how to modify the ColBERT scoring function to allow for more pruning without impacting its performance, and in Section~\ref{sec:lp}, we show how the problem of pruning document tokens \emph{with the guarantee of not modifying the score of the document}, that is, \emph{lossless pruning}, is equivalent to a Linear Programming problem~\cite{nocedalNumericalOptimization2006}.

\paragraph{\textbf{RQ2}} 
\emph{How can this motivate regularization (during training) and practical pruning strategies (before inference), and what is the effectiveness/efficiency trade-off of our overall methodology for late-interaction models?} 
In Section~\ref{sec:methodology}, we propose three regularizations (so that more document tokens are dominated) as well as two pruning strategies that help to increasing the pruning ratio. In Sections~\ref{sec:pruning_ratio} and~\ref{sec:in-domain}, we conduct extensive experiments based on these strategies that show that we can achieve a competitive efficiency-effectiveness trade-off.
 
\paragraph{\textbf{RQ3}}
\emph{How is our proposed regularization and the pruning strategy perform on out-of-domain datasets?} 
As regularization might prevent the model from generalizing to out-of-domain datasets, we conduct extensive experiments on a subset of the BEIR and LoTTE datasets in Section~\ref{sec:OoD}. Although the difference with non-pruned ColBERT is greater than in the in-domain scenario, we still outperform other pruning approaches and achieve a good efficiency-effectiveness trade-off.

\paragraph{\textbf{RQ4}}
\emph{Is it better to train from an MLM or from an IR checkpoint?}
As the proposed regularizations have an impact on the geometry of space, we question whether starting with an IR checkpoint, i.e. ColBERTv2, is better suited for pruning than starting from an MLM one. In Section~\ref{sec:Scratch}, we show that the impact is limited.

\section{Related Work} \label{sec:related_work}

\paragraph{Multi-vector dense retrieval} 

Multi-vector dense retrieval (MVDR) can be seen as a generalization of single-vector dual encoder models. Instead of encoding the full content of a query or document into a single vector of the representation space, MVDR uses fine-grained token-level modeling for scoring. 
Neural IR models using this type of strategy are called interaction models. The first models were based on non-contextualized token (or word) embeddings, and relied on complex post-processing of the interaction between document and query vectors to accommodate the lack of token contextual encoding~\cite{DRMM, K-NRM, Co-PACRR}. 

With transformer-based architectures, since token representations are contextualized, 
MVDR models rely on a much simpler interaction mechanism.
Models such as ColBERT models~\cite{ColBERT, ColBERTv2}, XTR~\cite{xtr}, COIL~\cite{COIL, COILcr}, ME-BERT~\cite{ME-BERT}, MVR~\cite{MVR}, and CITADEL~\cite{citadel} compute query-document relevance by computing, for each query token, the highest inner product of its representation with that of a document token representation -- see Eq. \eqref{eq:colbert} in Section~\ref{sec:dominance}. 
Some other works like AligneR~\cite{AligneR}, MVR~\cite{MVR}, ME-BERT~\cite{ME-BERT} propose that a query token could match multiple document tokens in the scoring stage. Although this improves the results, the use of only one token (argmax) is competitive and simplifies the retrieval architecture. We thus stick to the maximum inner product (unary salience) in this work.

\paragraph{Reducing the index size in MVDR models} 

Although MVDR models are more powerful than dual encoders, this, however, comes at a cost during inference. Late interaction methods~\cite{ColBERTv2, AligneR} are based on a two-stage (or more) retrieval, where the first candidate documents are retrieved using vector indices such as Faiss~\cite{faiss}. To avoid having too many vectors in the index, some works have proposed to first cluster them and index only the cluster centroids~\cite{PLAID, ColBERTv2, EMVB} -- which amounts to selecting candidate documents before scoring them by loading the full document representations. To avoid this re-ranking step, some approaches like SLIM and MUVERA~\cite{slim, MUVERA} propose to directly rely on cluster-based representations, which avoids the complex multistage retrieval and increases the inference speed -- but at the cost of decreasing the performance.


A complementary method to reduce index size is to directly prune document tokens before storing them~\cite{colbert_pruning, static_pruning, late_interaction_pruning, ColBERTer, AligneR}. 
Some of these methods are based on heuristics, such as pruning based on stopwords~\cite{static_pruning}, IDF value~\cite{static_pruning, late_interaction_pruning}, attention score~\cite{colbert_pruning, late_interaction_pruning}, or position~\cite{colbert_pruning, late_interaction_pruning}. Specifically, for attention score-based methods~\cite{late_interaction_pruning}, a token is important if many tokens are similar to it but not the other way around (i.e., tokens whose representation is close to a centroid of clusters of document vectors).
These attention-score methods bear some similarity with our model, but are not based on any training and might still discard important tokens. 
These \emph{static} strategies (the Transformer model is kept unchanged) generally lead to a low efficiency-effectiveness trade-off.
For example, repeated tokens such as ``doctor'' have the same IDF value, but keeping all of them might not provide supplementary information to the scoring function.
%

To further push the efficiency-effectiveness trade-off, some works propose \emph{dynamic} strategies in which the model learns which tokens should be pruned. AligneR~\cite{AligneR} employs entropy-regularized optimization (relaxed top-$k$ optimization), and ColBERTer~\cite{ColBERTer} reduces the number of vectors by combining sub-word vectors into unique whole-word representations and learns to discard them through a trainable gate vector. Similarly, He et al.~\cite{LEAP_MV} propose to learn if a vector should be ``keep'' or ``drop'' using Gumbel-Softmax sampling. Const-BERT~\cite{Const-BERT} proposes learning a pooling layer that projects document vectors to a smaller set of vectors (constant cardinality).
Our work follows this line of research, but we explore more principled ways to prune tokens, which in theory can lead to lossless pruning strategies and which work better in practice.

\section{Pruning ColBERT – Theory} \label{sec:dominance}

In this section, we first show how to adapt ColBERT so that more tokens can be pruned and define the concept of \emph{dominance} as well as its equivalence with a linear programming problem.

\subsection{Adapting ColBERT for Pruning}
\label{sec:adapting-colbert}

ColBERT uses an encoder to represent document and query tokens separately, before computing a score based on the inner product between query and document token representations. More formally, given a document $\mD$ and a query $\mQ$, each document (resp. query) token is represented by a vector $\vd \in \mD$ (resp. $\vq \in \mQ$) of dimension $d$. The score of a document $\mD$ for a given query $\mQ$ is given by the following formula: 
\begin{equation}
    \mathrm{ColBERT}(\mQ, \mD) = \sum_{\vq \in \mQ} \max_{\vd \in \mD} \vq \cdot \vd
    \label{eq:colbert}
\end{equation}
In this equation, as well as in the remainder of the paper, $\mD$ denotes both a matrix (where rows are the dimension of the latent space and columns the tokens of the document) \textit{and} a set of vectors in the latent space -- given the max operator of the scoring function, the two formulations are in practice equivalent. We also shorten the \textit{phrase document (resp. query) token representation} by \emph{document (resp. query) vector}.

To prune tokens, we need to find a subset $\mD^+=\mD \setminus{\mD^-}\subseteq \mD$ such that
$$
\forall \vq\in\R^d, \ \max_{\vd \in \mD} \ \vq \cdot \vd = \max_{\vd \in \mD^+} \vq \cdot \vd
$$

In ColBERT, the query and the document vectors have unit norm ($\|\vq\|=\|\vd\|=1$), that is, the model implicitly computes the cosine between the vectors $\vq$ and $\vd$. 
In this setting, unless two document tokens have \textit{exactly} the same representation, 
there is no way to remove any token $\vd^-$ from $D$ without impacting this maximum
-- since with $\vq = \vd^-$, we have $\vq\cdot \vd^- > \vq \cdot \vd$ for any $\vd\in \mD \setminus \{\vd^-\}$.

To allow for more tokens to be removed, we propose to modify this ColBERT scoring function by the following: 
\begin{equation}
    \mathrm{ColBERT_{P}}(\mQ, \mD) = \sum_{\vq \in \mQ} \max_{\vd \in \mD} [\pi_\theta(\vq) \cdot \pi_\theta(\vd)]_+
    \label{eq:colbert:relu}
\end{equation}
where $\pi_\theta$ is a projection operator and $[x]_+ = \mathop{ReLU}(x) = \max(0, x)$. 
To simplify the notation, in the remainder of the paper, we overload $\vq$ and $\vd$ as representations \emph{after} the projection $\pi_\theta$.

Compared to Eq.~\eqref{eq:colbert}, the first modification is to use a projection $\pi_\theta$ of the document/query tokens. 
This ensures that their norm be less or equal to 1, so that some tokens can be pruned on the document side. On the query side, this might help the model cope with more or less important tokens, but this is not strictly required for pruning. 
The second modification we make is to only consider positive inner products -- which in practice is already most of the time true in ColBERT because of the anisotropy of the representation space~\cite{godeyAnisotropyInherentSelfAttention2024}. 

Both of these changes are needed to remove some tokens from the document without affecting the ColBERT$_P$ score. This is illustrated in Figure~\ref{fig:domi}, where the document token that reaches the maximum inner product with a query $\vq$ is indicated by a color code, and where the inner product with $\vd_3$ is negative for any query located within the half-space indicated by the hatched area. In the other half-space, $\vd_1$, $\vd_2$ or $\vd_4$ have a higher inner product with any query vector $\vq$ than $\vd_3$. Note that without the ReLU transformation, $\vd_3$ could have a maximum inner product (gray zone) below zero, since the magnitude of the inner product with $\vd_3$ is less than with other document vectors in this area.


AligneR~\cite{AligneR} and ColBERTer~\cite{ColBERTer} do use another way to reduce the norm of a document vector, using a gating mechanism by predicting a scalar in $[0,1]$ that is multiplied by the document vector at hand.
Our projector $\pi_\theta$ reduces not only the vector norm and adds the ability to discard document vectors. This is important, especially when the norm is reduced but not to (a numerical) zero.
Returning to Figure~\ref{fig:domi}, according to the pruning scenario based on the norm (e.g., salience in the notation of AligneR) of the vectors~\cite{AligneR, ColBERTer},  $\vd_2$ should be pruned instead of $\vd_3$. With our approach, $\vd_3$ can be pruned without impacting the scoring function.





\subsection{Dominance and Linear Programming}
\label{sec:lp}

\paragraph{Local and Global Dominance}

Based on Eq.~\eqref{eq:colbert:relu}, we propose the following definition that allows removing tokens from a document without changing the $\mathrm{ColBERT}_P$ score.
\begin{definition}[Local dominance]
A vector $\mb d^- \in \mathbb R^d$ is \emph{dominated} by a set of vectors $\mb D$, which we denote by $\mb d^- \prec \mb D$, if and only if for any $\mb q\in\R^d$, 
\begin{equation}
\label{equ:local_domi}
\mbox{ either } \mb q\cdot \mb d^- \le 0 \mbox{ or }
\exists \mb d^+\in \mb D \mbox{ such that } \mb q\cdot \mb d^+ > \mb q \cdot \mb d^-
\end{equation}
\end{definition}
In this case, it is is easy to show that for any query vector $\mb q$,
$$
\max_{\mb d\in \mb D\cup \{\mb d^-\}} [\vq \cdot \vd]_+ = \max_{\mb d \in \mb D \setminus \{\mb d^-\}} [\mb q \cdot \mb d]_+
$$

Based on this definition, we can partition a set $\mb D$ of vectors into $\mb D^+$ and $\mb D^-$, by defining $\mD^-$ as the set of all the vectors $\vd$ dominated by $\mb D$.
The question remains whether any vector in $\mb D^-$ is dominated by $\mb D^+$.
If this is true, we only need to test for \emph{local dominance} to define the whole set of vectors that can be pruned without impacting the ColBERT$_P$ score. The following lemma proves this.

\begin{lemma}[Global dominance]
\label{lemma:global}
 Let $\mD^+$ and $\mD^-$ be a partition of the set of document vectors $\mD\subset\R^d$ such that any document vector $\vd \in \mD^-$ is dominated by $\mD$. Formally, 
 $$
 \mD^- = \left\{ \vd \in \mD \middle| \vd \prec \mD \right\} 
 \mbox{ and }
 \mD^+ = \mD \setminus \mD^-
 $$
 
 Then, for any document $\vd^- \in \mD^-$, $\vd^-$ is dominated by $\mD^+$. 
\end{lemma}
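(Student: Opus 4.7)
The plan is to prove Lemma~\ref{lemma:global} by a finite chain argument: starting from any $\vd^- \in \mD^-$ and any query vector $\vq$ witnessing a positive inner product with $\vd^-$, we iteratively apply the local dominance property and show that the resulting chain of strictly improving dominators must terminate inside $\mD^+$.

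More precisely, fix $\vd^- \in \mD^-$ and $\vq \in \R^d$. If $\vq \cdot \vd^- \le 0$, the first clause of Eq.~\eqref{equ:local_domi} holds for $\mD^+$ trivially. Otherwise, $\vq \cdot \vd^- > 0$. Since $\vd^- \prec \mD$, there exists $\vd_1 \in \mD$ with $\vq \cdot \vd_1 > \vq \cdot \vd^-$. If $\vd_1 \in \mD^+$, we are done. If $\vd_1 \in \mD^-$, then by definition $\vd_1 \prec \mD$; applied to the same $\vq$ (for which $\vq \cdot \vd_1 > \vq \cdot \vd^- > 0$), this yields some $\vd_2 \in \mD$ with $\vq \cdot \vd_2 > \vq \cdot \vd_1$. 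Iterating produces a sequence $\vd_1, \vd_2, \ldots$ in $\mD$ such that $\vq \cdot \vd^- < \vq \cdot \vd_1 < \vq \cdot \vd_2 < \cdots$

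Because the inner products are strictly increasing, the $\vd_i$ are pairwise distinct. Since $\mD$ is finite, the process cannot continue indefinitely, so there is some first index $k$ at which $\vd_k \in \mD^+$. This $\vd_k$ then witnesses the second clause of Eq.~\eqref{equ:local_domi} for $\mD^+$, since $\vq \cdot \vd_k > \vq \cdot \vd^-$. As $\vq$ was arbitrary, $\vd^- \prec \mD^+$.

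The only subtle point — and the potential obstacle — is ruling out a ``cycle'' where every dominator of a token in $\mD^-$ itself lies in $\mD^-$ and no $\mD^+$ element is ever reached. The key observation disposing of this worry is that the dominance inequality in Eq.~\eqref{equ:local_domi} is \emph{strict}: for a fixed $\vq$, the sequence of inner products strictly increases, so no vector can reappear, and finiteness of $\mD$ forces termination in $\mD^+$. Note that the argument relies crucially on $\mD$ being finite; if one replaced $\mD$ by an infinite set, the same chain could in principle be unbounded within $\mD^-$, and the conclusion could fail.
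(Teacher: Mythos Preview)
Your proof is correct and follows essentially the same finite-chain argument as the paper's: fix $\vq$, build a sequence of vectors with strictly increasing inner product against $\vq$, and use finiteness of $\mD$ to force termination in $\mD^+$. The only cosmetic difference is that the paper first climbs within $\mD^-$ to an element of maximal inner product and invokes local dominance once at that terminal element, whereas you invoke local dominance at every step of the chain; your version also handles the case $\vq\cdot\vd^-\le 0$ explicitly, which the paper glosses over.
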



\begin{proof} \label{prf:glb_domi}
To prove this lemma, we choose an arbitrary $\vq\in\R^d$ and $\vd_1 \in\mD^-$, and show that there exists $\vd^+\in\mD^+$ such that $\vq \cdot \vd^+ > \vq \cdot \vd_1$.

First, for for any $\vd_k\in \mD^-$, let us denote $\mD_\vq(\vd_k)$ a subset of $\mD^-$ such that 
any document vector in $\mD_\vq(\vd_k)$ have a higher inner product with $\vq$ than $\vd_k$, i.e.
$$\mD_\vq(\vd_k) = \left\{  \vd\in \mD^- \middle| \ \vd\cdot \vq > \vd_k\cdot \vq \right\}$$

If for some $k$, $\mD_\vq(\vd_k)\not=\emptyset$, we can choose $\vd_{k+1}\in \mD_\vq(\vd_k)$. By construction, $\mD_\vq(\vd_{k+1}) \subset \mD_\vq(\vd_k) \subset \mD^-$ and $\vq \cdot \vd_{k+1} > \vq \cdot \vd_{k}$.  As $\mD^-$ is finite, by induction we can conclude that for some $k^*$ we have $\mD_\vq(\vd_{k^*})=\emptyset$. 

Since $\vd_{k^*} \prec \mD$, this implies that 
$\exists \vd^+\in \mD\setminus \mD^- = \mD^+$ such that $\vd^+ \cdot \vq > \vd_{k^*} \cdot \vq$.
This in turn implies that there exists $\vd^+\in\mD^+$ such that 
$$
\vq \cdot \vd^+ > \vq \cdot \vd_{k^*} > \ldots > \vq \cdot \vd_1
$$
Since this holds for any query vector $\vq$, this implies that $\vd_1 \prec \mD^+$, which concludes the proof.
\end{proof}

\paragraph{Exact pruning and Linear Programming}

The above lemma implies that we only need to test local dominance to construct $\mD^+$, i.e. to find the vectors in $\mD$ that verify Eq.~\eqref{equ:local_domi}. We show that this corresponds to a Linear Programming (LP) problem below by using Farka's lemma~\cite{Farkas}.

%
%
Let us first state the version of Farka's lemma that we use.

\begin{lemma} [Farkas' Lemma]
Let $\mb A \in \R^{d\times n}$ and $\mathbf{b} \in \R^{m}$, then exactly one of the following assertion is true (the vector inequalities mean that they must hold for any component of the vector): 
\begin{enumerate}
    \item There exists $\mb{x} \in \R^n$ such that $\mb A\mb x = \mb{b}$ and $\vx\geq 0$.
    \item There exists $\vy \in \R^{d}$ such that $\mb A^\top \mb \vy\geq 0$ and $\mb{b}^\top\vy\leq 0$.
\end{enumerate}
\end{lemma}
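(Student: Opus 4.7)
The plan is to prove Farkas' Lemma by establishing both mutual exclusivity of the two alternatives and exhaustiveness, i.e., that at least one must always hold. Mutual exclusivity is a short computation: if both $\mA \vx = \mathbf{b}$ with $\vx \geq 0$ and $\mA^\top \vy \geq 0$ with $\mathbf{b}^\top \vy \leq 0$ held simultaneously, then
\begin{equation*}
\mathbf{b}^\top \vy \;=\; (\mA \vx)^\top \vy \;=\; \vx^\top (\mA^\top \vy) \;\geq\; 0,
\end{equation*}
which, combined with $\mathbf{b}^\top \vy \leq 0$, forces $\mathbf{b}^\top \vy = 0$. The classical statement sharpens case~2 to a strict inequality $\mathbf{b}^\top \vy < 0$, and this is the formulation I would use to obtain an outright contradiction; the weak-inequality version in the excerpt is equivalent modulo the harmless boundary case $\mathbf{b}^\top \vy = 0$.

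The substantive direction is exhaustiveness, for which I would introduce the convex cone $C = \{\mA \vx : \vx \geq 0\} \subseteq \R^d$ generated by the columns of $\mA$; case~1 then holds precisely when $\mathbf{b} \in C$. Assuming $\mathbf{b} \notin C$, I would invoke the strict separating hyperplane theorem to produce $\vy \in \R^d$ and $\alpha \in \R$ with $\vy^\top \vc \geq \alpha > \vy^\top \mathbf{b}$ for every $\vc \in C$. Taking $\vc = \zero \in C$ yields $\alpha \leq 0$, so $\mathbf{b}^\top \vy < 0$. Taking $\vc = t\, \mA \mathbf{e}_i$ for each standard basis vector $\mathbf{e}_i \in \R^n$ and letting $t \to \infty$ forces $(\mA^\top \vy)_i \geq 0$ for every $i$, giving $\mA^\top \vy \geq 0$ as required by case~2.

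The main obstacle is that the strict separation step tacitly requires $C$ to be closed, and linear images of closed sets are not closed in general. The standard workaround is Carathéodory's theorem for conical hulls: one expresses $C$ as the finite union, over subsets $S$ of linearly independent columns of $\mA$, of the cones generated by those subsets. Each such cone is the continuous image of $\R^{|S|}_{\geq 0}$ under an injective linear map and is therefore closed, and a finite union of closed sets is closed, so $C$ itself is closed. With this technical point in hand, the separating hyperplane argument above goes through and, combined with mutual exclusivity, yields the lemma.
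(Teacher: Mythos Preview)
The paper does not prove Farkas' Lemma at all: it merely states the result with a citation~\cite{Farkas} and then applies it to its specific matrix $\mA$ and vector $\mathbf{b}$. Your proof proposal is therefore not comparable to a proof in the paper, because there is none; you have supplied the standard textbook argument (closed finitely generated cone plus separating hyperplane), and it is correct for the classical strict-inequality formulation.

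One point worth flagging, which you touched on but understated: the weak-inequality version actually written in the paper is \emph{not} an ``exactly one'' alternative. Taking $\vy = \zero$ always satisfies $\mA^\top \vy \geq 0$ and $\mathbf{b}^\top \vy \leq 0$, so case~2 holds vacuously for every $\mA$ and $\mathbf{b}$, including when case~1 also holds. The paper's subsequent use of the lemma is unaffected---it only needs the direction ``if case~1 holds then the negation of case~2 (with strict inequality) fails,'' which is exactly what drives the dominance criterion in Eq.~\eqref{eq:lp-problem}---but the statement as recorded is formally incorrect, and your remark that the two versions are ``equivalent modulo the harmless boundary case'' is too generous.
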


Setting $\mA = \mat{ccc}{\vd - \vd_1 & \ldots & \vd - \vd_n}$ and $\mb{b} = -\vd$, where $\vd$ is our target document vector, we can use the lemma to show that if
\begin{equation}
    \exists \mb{x} \in \R^n,
    \mat{ccc}{\vd - \vd_1 & \ldots & \vd - \vd_n}^\top\mb x 
= -\mb{d} \mbox{ and } \vx\geq 0
\label{eq:lp-problem}
\end{equation}
then for any query vector $\vq$ either $\vd^\top\vq\le 0$, or there exists $i$ such that 
$$
\left(\vd - \vd_i\right)^\top \vq < 0 \Leftrightarrow \vd \cdot \vq < \vd_i \cdot \vq
$$
which corresponds exactly to the definition of local dominance.




We therefore have shown that the problem of determining whether a document token is dominated is equivalent to a linear programming (LP) problem which can be solved using linear programming solvers~\cite{nocedalNumericalOptimization2006}. 
However, LP problems are computationally expensive to solve. The complexity of the simplex algorithm is not guaranteed to be polynomial in the worst case~\cite{simplex}, and there is still ongoing work to improve the time complexity of other approaches, such as the methods of interior points~\cite{LP_solve_1, LP_solve_2}. 

To reduce the computational cost, we describe below simple techniques to determine whether $\vd$ belongs to $\mD^+$ or $\mD^-$, before using LP solvers for the remaining ones.

First, according to the lemma~\ref{lemma:global} that links local to global dominance, if $\vd^-$ is dominated, then whatever the query vector $\vq$, we can find $\vd^-\in\mD^+$ that dominates it, that is, for which $\vq \cdot \vd^+ > \vq \cdot \vd^-$. This shows that whenever we find a dominated vector $\vd^-$, we can directly remove it from $\mD$  for the remaining local dominance tests. 

Second, we compute the inner product of each document vector $\vd$ with the others and look at whether the vector matches itself, that is, whether $\vd \cdot \vd \ge \vd \cdot \vd^\prime$ for $\vd\in\mD$. If this is the case, we know for sure that $\vd \in \mD^+$. In practice, between 20 and 30\% of ``dominating'' tokens can be identified by this simple strategy.

\section{Methodology} \label{sec:methodology}

In this section, we describe a methodology for training and pruning in practice our ColBERT$_P$ model, based on the dominance theory we described in the previous section.

\subsection{Approximate Pruning}

The LP-based process described in Section~\ref{sec:lp} is exact but has two main limitations:
\begin{itemize}
    \item Some vectors might have a very small component that prevents them from being dominated; 
    \item Solving an LP problem is time-consuming.
\end{itemize}
To tackle the first problem, we propose using dimensionality reduction; for the latter problem, we propose using a very simple heuristic, based on the norm of the vectors, and we experimentally study how well it approximates the LP solution.

\subsubsection{LP-based Pruning with Reduced Dimension}
\label{sec:lp:reduce}

Based on the dominance analysis in the last section, a document token $\vd \in \mD$ should be pruned if, for any query vector $\vq$, we can find $\vd^+ \in \mD$ such that $\vd^+ \cdot \vq > \vd \cdot \vq$. Then a requirement for local dominance is that $\vd$ is in the span of vectors in $\mD\setminus\{\vd\}$. 
However, this rarely happens in practice because of the MLM pre-training that induces a strong prior on the model, that of trying to preserve unique token semantic information. 

We hence propose to project the vectors in a lower-dimensional space by using truncated singular value decomposition (SVD). Using a truncated SVD allows for preserving dimensions which are the most common between the vectors, and thus discarding the vector residuals that prevent vectors from belonging to the same subspace. 

Formally, given the reduced singular value decomposition of our input document $\mD = \mb U \mb{\Sigma}\mb V^\top$, with 
$\mb U\in\R^{d \times m}$, $\mb \Sigma\in \R^{m\times m}$ a diagonal matrix with decreasing and non-negative diagonal components, and $\mb V\in \R^{m \times n}$ with $m \le \min(d,n)$,
we can write a document vector $\vd_i$ as $\mb U \mb \Sigma \mb v_i$ with $\mb v_i$ the column $i$ of $\mb V^\top$. For any query $\vq$, we have 
$$\vq \cdot \vd_i = \vq ^\top \mb U \mb \Sigma \mb v_i = (\mb U^\top \mb q)^\top \left(\mb \Sigma\mb v_i \right)$$ 
The above equation shows that the local dominance problem can be solved within a space of dimension $m \le d$, by posing $\vq^\prime=\mb U^\top \vq \in \R^{m}$ since $\vq$ is arbitrary. This has an impact on matrix conditioning, which helps stabilize the LP algorithm~\cite{Condition_number_for_lp_1, Condition_number_for_lp_2, Condition_number_for_lp_3}, and also allows to speed up the overall local dominance decisions. Formally, the problem is equivalent to
$$
\bar \mD = \mb\Sigma \mat{ccc}{\mb v_1 & \ldots & \mb v_n }
$$

We can push this further by making the decision of local dominance \emph{approximate}.
Denoting $\mb \Sigma_k$ the first $k<m$ columns of $\mb \Sigma$ and $\mb v_i^{(k)}$ the first $k$ components of $\mb v_i$, we know that $\mb U \Sigma_k \mb v_i^{(k)}$ is the best approximation of $\vd_i$ in a $m$-dimensional subspace (in terms of Frobenius distance). 
We therefore propose to use $\bar\mD^{(k)}=\mb \Sigma^{(k)}\mat{ccc}{\mb v^{(k)}_1 &\ldots &\mb v^{(k)}_n}$ instead of $\mD$ as the document matrix, which helps to remove the barely used dimensions (and thus increases the number of dominated vectors).

In practice, we use $k$ such that most of the singular values are covered by $\Sigma_k$, that is, $$\frac{tr(\mb\Sigma)-tr(\mb\Sigma^{(k)})}{tr(\mb \Sigma)} \ge \theta_{LP}$$ where $\theta_{LP}$ is an arbitrary bound. Note that dimensionality dimension is only used to prune the vectors -- i.e. we use the non-pruned vectors $\vd\in\mD$ to represent a document.

\subsubsection{Norm Pruning}

A lightweight alternative to LP-based pruning is to rely on norm-based pruning, such as in AligneR~\cite{AligneR} or ColBERTer~\cite{ColBERTer}. 
As having more dominated vectors implies having more low norm vectors, it is natural to question whether using a norm-based strategy for pruning is interesting in our case.
Compared to the other methods, the pruning based on the norms is fast and the impact on the score is directly related to the norm of the document vector. 
In our experiment, we denote the $\theta_{N}$ as the threshold for this pruning strategy, where it removes all document tokens that have a norm smaller than this threshold.

However, the fact that other document vectors might be closely related -- or not -- to a document vector with a low norm is not taken into account. This is illustrated in Figure \ref{fig:domi}, where $\vd_3$ is not pruned as it should and $\vd_2$ is pruned but might have an impact on the retrieval score. 

\subsection{Regularization}
\label{sec:regularization}

Using an already trained ColBERT model and pruning the document vectors as described above might not work well because the model is not inherently biased toward producing document representations that are prone to high pruning ratios -- even if not using the cosine product. 
As the pruning strategies that rely on the LP problem are not easily differentiable, we propose three different regularizations that can be added to the IR-related loss so that the resulting set of document vectors can be pruned more effectively. 

\paragraph{Nuclear norm regularization}

To force some more document tokens to dominate, we need to ensure that the document vectors are in the span of each other. Given the matrix $\mD$, this is linked to the problem of minimizing the nuclear norm of $\mD$, which is formally defined as the $L_1$ norm of the singular values of the matrix, i.e., $\mathrm{trace}\left(\mb \Sigma\right)$ using the notation defined in Section~\ref{sec:lp:reduce}. As $L_1$ pushes some components to zero, this reduces the size of the subspace where the document vectors in $\mD$ lie, and also puts pressure on reducing the norm of the document vectors. Both are linked to a higher pruning ratio.
Formally, our regularization is defined as
\begin{equation}
\mathcal{L}^{(\mathrm{nuc})}_{r} = \frac{1}{\min(n, d)}\|\mD\|_{nuc}
\end{equation}
where $n$ is the number of document vectors and $d$ the dimension of the token representation space.





\paragraph{Document tokens similarity matrix regularization}

Although the nuclear norm pushes the document vectors to a small-dimensional subspace, it has two problems. First, its computation is time-consuming. Second, it does not really ensure that some document vectors are dominated.

We therefore propose to rely on a simpler measure based on the inner product between $\vd$ and the other vectors $\vd^\prime$ of $\mD$, whereby we want to maximize the inner product $\vd \cdot \vd^\prime$ between $\vd$ and $\vd^\prime$. We then modify this inner product to match our needs:
\begin{itemize}
    \item  As we don't want the norm of $\vd$ to increase (which would reduce its likelihood to be pruned), we divide the inner product by the norm of $\vd$;
    \item We only consider \emph{positive} inner products to avoid the norm of $\vd^\prime$ to be reduced when maximizing the inner product;
    \item Finally, we want more pressure on low norm document vectors $\vd$;
\end{itemize}

Putting all together, this leads to the following regularization (we take the opposite since we frame it as a minimization problem):
\begin{equation}
\mathcal{L}^{(\mathrm{sim})}_{r} = - \frac{1}{n(n-1)}\sum_{\vd \in \mD} (1 - \|\vd\|_2) \sum_{\vd^\prime \in \mD \setminus \{\vd\}} \frac{[\vd \cdot \vd^\prime]_+}{\|\vd\|_2 + \varepsilon}
\end{equation}
where $n$ is the number of document vectors and $\varepsilon$ is a small value to avoid numerical instability when calculating the gradients.\footnote{We use the value $\varepsilon=0.01$ in our experiments.} 

\paragraph{L1 norm}
Similarly to ColBERTer~\cite{ColBERTer}, which uses an L1 norm regularization of sigmoid-based gates (at word level), we propose to use L1-norm regularization that helps to reduce the number of nonzero components in vectors. This in turn allows us to prune vectors, since if $\|\vd\|=0$, the document vector $\vd$ is straightforwardly dominated. Our third regularization loss is thus:

\begin{equation}
\mathcal{L}^{(L1)}_{r} = \frac{1}{n}\sum_{\vd \in \mD} \|\vd\|_1
\end{equation}

\subsection{IR loss}

Our main loss is the one used ColBERTv2~\cite{ColBERTv2} which ensures that we learn an IR scoring function. ColBERTv2 uses both a distillation and an infoNCE loss.
We use the hard negatives provided by a cross-encoder.\footnote{We use the  \href{https://huggingface.co/colbert-ir/colbertv2.0_msmarco_64way/resolve/main/examples.json?download=true}{same}  datasets as as ColBERTv2} 
Based on this collection, we use tuples consisting of one query $q$, one annotated positive passage $p^+$ and a hard negative $p^-$, as well as a set of easy negative passages $P^-_{easy}$ (in-batch negatives). 
Following previous work~\cite{ColBERTv2, RocketQAv2, tas-balanced, citadel, tct-colbert, distillation}, we use a KL-divergence loss to distill the teacher's score into the student architecture.
Using the following estimation of the probability of $p_+$ being a positive passage,
$$p(p_+|q, N) = \frac{e^{s(q, p_+)}}{e^{s(q, p_+)} + \sum_{p \in N} e^{s(q, N)}},$$
the IR loss is defined as a combination of distillation and cross-entropy:
\begin{multline}
\mathcal{L}_{IR} = KL(p_{ColBERT_+}(. | q, N_h) || p_{teacher}(.|q, N_h)) \\ + \log p_{ColBERT_+}(p_+|q, N_a)
\label{eq:ir-loss}
\end{multline}
with $N_a$ all the documents (positive, hard, and easy negatives), and $N_h$ the positive and hard negative pair.


\subsection{Overall}

The overall loss of our training is the combination of the regularization and IR losses, defined as follows:

\begin{equation}
\mathcal{L}_{final} = \mathcal{L}_{IR} + \alpha \mathcal{L}_{r}
\end{equation}
where $\alpha$ corresponds to the coefficient of the regularization and $\mathcal{L}_r$ is one of the regularization losses defined in Section~\ref{sec:regularization}.


\section{Experiments} \label{sec:xps}
In this section, we describe the experiments we develop to validate our pruning strategies and the proposed regularizations.

\subsection{Experiment Setup}

\paragraph{Datasets} 

We train our model on the MS MARCO v1 passage retrieval dataset~\cite{MSMARCO}. This dataset contains 8.8M passages. 
We evaluate the model on the MS MARCO dev set (6980 queries), as well as the TREC 2019 DL (43 assessed topics) 
and TREC 2020 DL (54 assessed topics) sets. 
%
We also report the zero-shot performance of our training and pruning strategies using a subset of the BEIR~\cite{BEIR} benchmark, as well as of the LoTTE collections~\cite{ColBERTv2}~(datasets were chosen to lower the overall computation budget).

\paragraph{Pipeline} \label{sec:pipeline}

We evaluate our model using a reranking-based pipeline. 
Specifically, we use our own trained SPLADEv2~\cite{SPLADEv2} as a first-stage retrieval and
then use our model to rerank the top-k document. 
We did not use the standard ColBERT pipeline in our experiments for two reasons. 
First, following Formal et al.~\cite{SPLATE}, 
the SPLADE model is a very effective first-stage retriever that is competitive with the first-stage used in ColBERT, namely PLAID~\cite{PLAID} or EMVB~\cite{EMVB}. 
Second, in our model, as document tokens are not normalized, the centroid interaction used by PLAID and EMVB~\cite{PLAID, EMVB} does not work.
The document tokens with small norms are more likely to be in the same cluster, and this increases the number of candidates to be re-ranked, which is problematic. 
Finally, as we experiment with different ColBERT variations and pruning strategies, building a PLAID~\cite{PLAID} or EMVB~\cite{EMVB} index would be quite costly.

For the evaluation, we use the top 100 passages on the MS MARCO dev set, BEIR, and LoTTE (given the high number of topics), and the top 1000 for the evaluation on the TREC DL 2019 and 2020.

\subsection{Implementation Details}

We conduct experiments\footnote{PyTorch 2.0.1,  HuggingFace transformers 4.37, XPM-IR 1.3.1~\cite{Xpmir}} with NVIDIA TESLA V100 gpus. 
We use an AdamW optimizer with a learning rate of 1e-5, a batch size of 16, and train during 320k steps. We use the ColBERTv2 checkpoint to initialize our model parameters\footnote{\href{https://huggingface.co/colbert-ir/colbertv2.0}{https://huggingface.co/colbert-ir/colbertv2.0}}~\cite{ColBERTv2}.
We validate the training result on a subset of the MS MARCO development set (1000 topics, distinct from the final dev-small evaluation set) every 8000 steps. 
To select the checkpoint providing a good trade-off between effectiveness and efficiency, as a validation metric, we use a weighted harmonic mean, i.e. H0.5, H1 and H2, using MRR@10 and an estimation of the pruning ratio.\footnote{On 1024 documents, using approximate LP-based pruning with a cumulative singular value proportion of 0.7}
The training takes around 5 days (L1 and document similarity regularization) or 8 days (nuclear norm regularization) on a single GPU.

For the projection $\pi_\theta$, different from previous work~\cite{AligneR, ColBERTer} which computes a scalar in $[0,1]$ using a linear layer followed by a sigmoid, we propose to project the last hidden layer of the BERT model (denoted as $\vd^{BERT}$) onto a vector of dimension $d^\prime > d$. We normalize this vector so it is norm 1, before restricting it to the first $d$ dimensions. In practice, we obtain a vector of norm equal to or less than 1, as well as learning to remove the not-important directions of the original document tokens. Formally, we have 
$$
\pi_\theta(\vd^{BERT}) = \textrm{normalize}\left(\mat{c}{\mb W_1\\ \mb W_2}\vd^{BERT}\right)_{:d}
$$
where $\mb W_1$ is the ColBERT projection matrix (so we can use the ColBERTv2 checkpoint to initialize $\mb W_1$), and $\mathbf{W}_2$ is a projection matrix (with an output dimension of 32) which is initialized randomly.

\paragraph{Baseline}

We first consider the following baselines:
\begin{compactitem}
    \item A standard IR baseline, BM25~\cite{BM25},
    \item SPLADEv2~\cite{SPLADEv2}, our first-stage retriever, and a state-of-the-art sparse model;\footnote{We did not use SPLADEv3~\cite{Spladev3} since its training strategy is more complex than the one used in this paper}
    \item ColBERTv2~\cite{ColBERTv2} end-to-end which uses the full ColBERTv2 pipeline;
    \item ColBERTv2~\cite{ColBERTv2} re-ranking, which re-reranks the documents retrieved by SPLADEv2;
\end{compactitem}

Our second set of baselines are late-interaction models that prune document vectors -- note that all the static methods in the pruning part are based on ColBERTv2~\cite{ColBERTv2}:
\begin{compactitem}
    \item AligneR~\cite{AligneR} is a state-of-the-art ``learning-to-prune'' method that employs an entropy-regularized loss to remove tokens from a document representation. We use only their reported results based on the ``base'' size of BERT~\cite{BERT} to make the result comparable with ours\footnote{AligneR is trained without distillation}; 
    \item Const-BERT~\cite{Const-BERT} projects the document vectors into a fixed number of vectors;
    \item \emph{IDF} pruning~\cite{late_interaction_pruning} that removes tokens with the lowest (token-based) IDF. We copy their results with 50\% pruning ratio in our table;
    \item \emph{First-$p$}~\cite{late_interaction_pruning} that only keeps the first $p$ tokens. We copy their results with 50\% pruning ratio in our table;
    \item \emph{Stopwords} pruning~\cite{static_pruning} that removes tokens from a pre-determined list of stopwords; 
    \item \emph{Attention}-based pruning~\cite{late_interaction_pruning} (see Section~\ref{sec:related_work});
    \item \emph{ColBERTer}~\cite{ColBERTer}. It is a DistilBERT-based ColBERT model in which tokens are first aggregated into words before being pruned.
\end{compactitem}

Among all the results reported in these papers, we did select the ones reported in the tables by either using the configuration put forward by the authors (AligneR, ColBERTer, Const-BERT), or selecting the best trade-off between pruning ratio and effectiveness (for IDF, first-$p$, stop-words). 


\section{Results and Analysis} \label{sec:res}

In this section, we perform some quantitative and qualitative analysis based on our experimental results.


 
\subsection{(In-Domain) Pruning Ratio Analysis} \label{sec:pruning_ratio}
\begin{figure}
    \centering
    \includegraphics[width=1\linewidth]{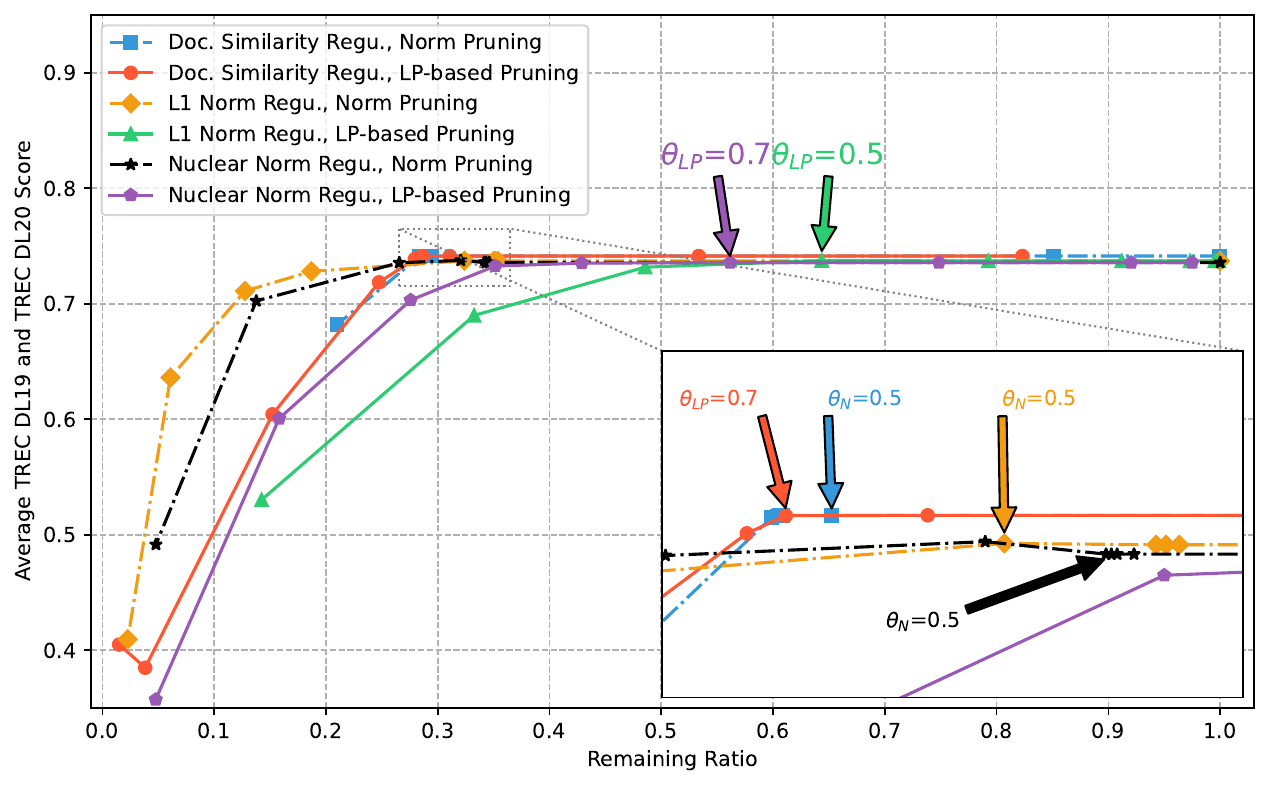}
    \caption{Average TREC DL (2019 and 2020) nDCG@10 for different regularizations and pruning ratios}
    \label{fig:pruning_analysis}
\end{figure}
We first examine the impact of varying pruning thresholds on the pruning ratio and evaluation results, based on our Norm (threshold from 0 to .999) and LP-based (threshold between 0.2 and 0.95) pruning strategy. We keep the regularization weight constant for each of our strategies ($\alpha=0.01$ for the L1 norm and $\alpha=0.8$ for Document Similarity and $\alpha=0.1$ for nuclear norm). These weights were set so that they perform similar without any pruning, to measure the effect of the pruning algorithm itself.

%
In Figure \ref{fig:pruning_analysis}, we can observe that, in general, all our models have no clear performance drop until around 30\% to 40\% of the remaining tokens. Beyond this threshold, we can observe some clear drop in performance. 
The best performance is achieved using document similarity-based regularization, with LP or norm-based pruning.
If we want to prune more, the best models are based on L1-Norm or nuclear norm regularizations, with norm pruning. 
Comparing the different pruning strategies, we found that $\mathcal{L}_r^{(sim)}$ is the regularization that works best for the LP pruning strategy. It provides a similar pruning ratio to the models with the L1-norm pruning, which means that our designed regularization is close to the idea of our proposed theory based on the LP. In contrast, $\mathcal{L}_r^{(nuc)}$ and $\mathcal{L}_r^{(L1)}$ lead to a larger difference in the pruning ratio between the two pruning strategies. It shows that these two methods' approximations are far from our dominance theory and LP procedure.

\subsection{In-Domain Retrieval} \label{sec:in-domain}


In table \ref{tab:in-domain}, we show our in-domain evaluation result, compared with baselines. 
Based on the results presented in the previous section, we use $\theta_{LP}=0.7$ and $\theta_{N}=0.5$.

For each trained model, to maintain a good trade-off between the evaluation result and the pruning ratio, we use the checkpoint with the best weighted harmonic value H2 to evaluate our final model, with more emphasis on the pruning ratio.

Comparing our methods with other state-of-the-art methods, we can observe that our best model (similarity loss $\mathcal L_r^{(sim)}$ with $\alpha=0.8$ and LP pruning with $\theta_{LP}=0.7$) is on par with the ColBERTv2 model on the dev-small set~(MRR@10 of 39.7 vs. 40.0) and the set DL19 (nDCG@10 of 74.8 vs. 74.4), while only keeping 40\% of the total document tokens. The performance is a bit worse on DL20 (nDCG@10 73.3 vs 75.6) but still well over the SPLADEv2 original ranking (68.7).

Among our models, all regularizations have a direct impact on pruning, since increasing $\alpha$ leads to higher pruning rates. We observe that L1-norm pruning is adapted to nuclear and L1-norm regularization, achieving similar performance with a higher pruning ratio. However, for our similarity-based regularization, LP-based pruning is more advantageous, showing that this regularization is more related to dominance than the others.

Finally, comparing our methods with the other state-of-the-art methods which use the static (e.g. IDF, Attention Score, Stopwords, First $p$) or dynamic (e.g. AligneR) pruning methods, all our regularization outperform their result on both the evaluation effectiveness and the pruning ratio. 


%



\begin{table}[h!]
\renewcommand\cellalign{l}
\caption{The in-domain evaluation results (MRR@10 for dev small, nDCG@10 for TREC DL19 and DL20), and remaining ratio (for pruning models). We highlight the models and corresponding pruning strategy we use in our further experiments (bold). $\dagger$: NO statistical significant performance drop with  ($p\leq0.05$) when comparing with ColBERTv2 re-ranking with the two-tailed Student’s t-test.} \label{tab:in-domain}
\begin{tabular}
{m{1cm}m{1.25cm}C{1cm}C{1cm}C{1cm}C{1cm}}
\Xhline{1pt}
\multicolumn{2}{c}{model} & \textbf{\footnotesize{Remain}} & \textbf{\footnotesize{Dev set}} & \textbf{\footnotesize{DL 19}} & \textbf{\footnotesize{DL 20}} \\
\Xhline{1pt}
\multirow{5}{*}{\makecell{w/o \\ Pruning}} &
BM25 & - & 18.7 & 50.6 & 47.5\\
&SPLADEv2 & - & 35.8 & 70.6 & 68.7 \\ 
&ColBERTv2 \newline \scriptsize{end-to-end} & - & 39.7 & 74.5 & - \\
&ColBERTv2 \newline\scriptsize {reranking} & - & 40.0 & 74.4 & 75.6 \\
\Xhline{1pt}
\multirow{6}{*}{\makecell{w/ \\ Pruning}} &
First $p$ & 50\% & 37.7 & 72.3 & - \\
&IDF & 50\% & 32.6 & 70.2 & - \\
&Stopwords & 67\% & - & 70.9 & 67.8 \\
&Att. Score & 50\% & 36.0 & 72.0 & - \\
&AligneR$_{base}$ & 40\% & 38.1 & - & - \\
&{ConstBERT} & 47\% & 39.0 & 73.1 & 73.3 \\
\Xhline{1pt}
\multicolumn{6}{c}{LP-based pruning with threshold 0.7} \\ \hline 
\multirow{3}{*}{\makecell{Ours \\ $\alpha\mathcal{L}_r^{(L1)}$}}
& $\alpha=0.1$ & 25\% & 37.1 & 72.8$^\dagger$ & 70.6\\
& $\alpha=0.05$ & 46\% & 37.9 & 72.6$^\dagger$ & 71.4\\
& $\alpha=0.01$ & 83\% & 39.2 & 73.6$^\dagger$ & 72.1 \\
\hline
\multirow{5}{*}{\makecell{Ours \\ $\alpha\mathcal{L}_r^{(sim)}$} } 
& $\alpha=2$ & 24\% & 38.4 & 72.5$^\dagger$ & 72.6 \\
& $\alpha=1.25$ & 29\% & 39.3 & 73.7$^\dagger$ & 72.8 \\
& $\alpha=0.8$ & \textbf{32\%} & \textbf{39.7$^\dagger$} & \textbf{73.8$^\dagger$}& \textbf{73.3} \\
& $\alpha=0.5$ & 37\% & 39.5 & 73.9$^\dagger$ & 73.7 \\
& $\alpha=0.1$ & \textbf{52\%} & \textbf{39.9$^\dagger$} & \textbf{74.2$^\dagger$} & \textbf{73.8}\\
\hline
\multirow{3}{*}{\makecell{Our \\ $\alpha\mathcal{L}_r^{(nuc)}$}}
& $\alpha=0.3$ & 38\% & 38.5 & 71.9$^\dagger$ & 73.9$^\dagger$ \\
& $\alpha=0.2$ & 40\% & 39.2 & 73.9$^\dagger$ & 73.1 \\
& $\alpha=0.1$ & 56\% & 39.4 & 73.9$^\dagger$ & 73.2 \\
\Xhline{0.8pt}
\multicolumn{6}{c}{L1-norm based pruning with threshold 0.5} \\
\hline
\multirow{3}{*}{\makecell{Ours \\ $\alpha\mathcal{L}_r^{(L1)}$}}
& $\alpha=0.1$ & 9\% & 37.5 & 73.0$^\dagger$ & 70.6\\
& $\alpha=0.05$ & 14\% & 37.9 & 72.6$^\dagger$ & 71.4\\
& $\alpha=0.01$ & \textbf{31\%} & \textbf{39.2} & \textbf{73.6$^\dagger$} & \textbf{72.1} \\
\hline
\multirow{5}{*}{\makecell{Ours \\ $\alpha\mathcal{L}_r^{(sim)}$} } 
& $\alpha=2$ & 25\% & 38.7 & 72.8$^\dagger$ & 72.6 \\
& $\alpha=1.25$ & 29\% & 39.4 & 73.1$^\dagger$ & 72.9 \\
& $\alpha=0.8$ & 33\% & 39.7$^\dagger$ & 73.8$^\dagger$ & 73.2 \\
& $\alpha=0.5$ & 37\% & 39.5 & 73.9$^\dagger$ & 73.7 \\
& $\alpha=0.1$ & 51\% & 39.9$^\dagger$ & 74.2$^\dagger$ & 73.8\\
\hline
\multirow{3}{*}{\makecell{Our \\ $\alpha\mathcal{L}_r^{(nuc)}$}}
& $\alpha=0.3$ & 18\% & 38.5 & 71.9$^\dagger$ & 73.9$^\dagger$ \\
& $\alpha=0.2$ & 25\% & 39.2 & 73.9$^\dagger$ & 73.1 \\
& $\alpha=0.1$ & 34\% & 39.4 & 73.9$^\dagger$ & 73.2 \\
\Xhline{1.5pt}
\end{tabular}
\end{table}

\begin{table}[h!]
\renewcommand\cellalign{l}
\caption{The LoTTE search results in Success@5 and the remaining ratio of our methods for each sub-dataset. $\dagger$: NO statistical significant performance drop with ($p\leq0.05$) from the result of ColBERTv2 reranking version under the two-tailed Student’s t-test.}  
\begin{tabular}{m{1cm}m{1.6cm}C{0.5cm}C{0.5cm}C{0.5cm}C{0.5cm}C{0.5cm}|C{0.5cm}}
\Xhline{1pt}
\multicolumn{2}{c}{model}  & \textbf{Wrt.} & \textbf{Rcr.} & \textbf{Sci.} & \textbf{Tch.} & \textbf{LS.} & \textbf{Avg.}\\
\Xhline{1pt}
\multirow{5}{*}{\makecell{w/o. \\ Pruning}}  
& BM25  
& 60.3 & 56.5 & 32.7 & 41.8 & 63.8 & 51.0 \\
& SPLADEv2 
& 72.1 & 67.8 & 53.2 & 60.1 & 80.0 & 66.6\\
&ColBERTv2 \newline \scriptsize{end-to-end}
& 80.1 & 72.3 & 56.7 & 66.1 & 84.7 & 72.0\\
&ColBERTv2 \newline \scriptsize{reranking}  
& 80.0 & 73.0 & 56.9 & 66.4 & 84.3 & 72.1\\
\Xhline{1pt}
\multirow{2}{*}{\makecell{\small{Ours} \\ \small{$\alpha\mathcal{L}_r^{(L1)}$}}}
& $\alpha=0.01$ 
& 77.5 & 72.5$^\dagger$ & 57.2$^\dagger$ & 65.4$^\dagger$ & 84.1$^\dagger$ & 71.4\\
& Remain \%  
& \textcolor{gray}{29\%} & \textcolor{gray}{29\%} & \textcolor{gray}{26\%} & \textcolor{gray}{31\%} & \textcolor{gray}{26\%} & \textcolor{gray}{29\%} \\
\Xhline{0.8pt}
\multirow{4}{*}{\makecell{\small{Ours} \\ \small{$\alpha\mathcal{L}_r^{(sim)}$}}}
& $\alpha=0.8$ 
& 77.4 & 72.1$^\dagger$ & 56.9$^\dagger$ & 65.9$^\dagger$ & 84.0$^\dagger$ & 71.3 \\
& Remain \%   
& \textcolor{gray}{34\%} & \textcolor{gray}{35\%} & \textcolor{gray}{35\%} & \textcolor{gray}{37\%} & \textcolor{gray}{34\%} & \textcolor{gray}{35\%} \\
\cline{2-8}{}
& $\alpha=0.1$
& 78.9$^\dagger$  & 72.0$^\dagger$  & 57.2$^\dagger$  & 65.4$^\dagger$  & 84.1$^\dagger$  & 71.3\\
& Remain \%  
& \textcolor{gray}{57\%} & \textcolor{gray}{58\%} & \textcolor{gray}{70\%} & \textcolor{gray}{59\%} & \textcolor{gray}{59\%} & \textcolor{gray}{61\%} \\
\Xhline{1.5pt}
\label{tab:lotte}
\end{tabular}
\end{table}

\begin{table}[h!] 
\renewcommand\cellalign{l}
\caption{The MLM training and the IR oriented training results. $\dagger$: NO statistical significant performance drop with  ($p\leq0.05$) from the result of Distil-ColBERTv2 reranking version under the two-tailed Student’s t-test. } 
\begin{tabular}
{m{1.2cm}m{1.2cm}C{1cm}C{0.8cm}C{0.8cm}}
\Xhline{1pt}
\multicolumn{2}{c}{model} & \textbf{\footnotesize{Remain}} & \textbf{\footnotesize{DL 19}} & \textbf{\footnotesize{DL 20}} \\
\Xhline{1pt}
\multicolumn{2}{l}{Distil-ColBERTv2 \scriptsize {reranking}} 
& - & 73.5 & 73.6 \\
\multicolumn{2}{l}{ColBERTer} 
&  40\% & 72.7 & 73.3 \\ 
\Xhline{1pt}
\multirow{2}{*}{\makecell{ \small{$0.01\mathcal{L}_r^{(L1)}$}}}  
& \small{MLM cp.} 
& 35\% & 73.2$^\dagger$ & 72.6$^\dagger$ \\ 
\cline{3-5}
&\small{IR cp.}
& 32\% & 73.1$^\dagger$ & 72.3$^\dagger$ \\ 
\Xhline{1pt}
\multirow{2}{*}{\makecell{ \small{$0.8\mathcal{L}_r^{(sim)}$}}}  
& \small{MLM cp.} 
& 34\% & 73.3$^\dagger$ & 73.0$^\dagger$ \\
\cline{3-5}
&\small{IR cp.}
& 32\% & 72.4$^\dagger$ & 72.3$^\dagger$ \\
\Xhline{1pt}
\label{tab:scratch_res}
\end{tabular}
\end{table}

\begin{table*}[h!] 
\renewcommand\cellalign{l}
\caption{The BEIR results in nDCG@10 and remaining ratio of our methods for each sub-dataset. $\dagger$: NO statistical significant performance drop with  ($p\leq0.05$) from the result of ColBERTv2 reranking version under the two-tailed Student’s t-test. } \label{tab:beir}
\begin{tabular}
{m{1cm}m{1.6cm}C{1.2cm}C{0.8cm}C{0.8cm}C{0.8cm}C{0.8cm}C{0.8cm}C{0.8cm}C{0.8cm}C{0.8cm}C{0.8cm}C{0.8cm}C{0.8cm}C{0.8cm}C{0.8cm}|C{0.8cm}}
\Xhline{1pt}
\multicolumn{2}{c}{model} & \textbf{Remain} & \textbf{CF} & \textbf{DB} & \textbf{FQ} & \textbf{NF} & \textbf{NQ} & \textbf{QU} & \textbf{SD} & \textbf{SF} & \textbf{TC} & \textbf{TO} & \textbf{Avg.}\\
\Xhline{1pt}
\multirow{5}{*}{\makecell{w/o. \\ Pruning}}  
& BM25 & - 
& 21.3 & 31.3 & 23.6 & 32.5 & 32.9 & 78.9 & 15.8 & 66.5 & 65.6 & 36.7 & 40.5\\
& SPLADEv2 & - 
& 21.3 & 42.0 & 31.6 & 32.8 & 50.8 & 81.1 & 14.0 & 65.9 & 65.5 & 25.5 & 43.0\\
&ColBERTv2 \newline \scriptsize{end-to-end} & - 
& 17.6 & 44.6 & 35.6 & 33.8 & 56.2 & 85.2 & 15.4 & 69.3 & 73.8 & 26.3 & 45.8\\
&ColBERTv2 \newline \scriptsize{reranking} & - 
& 20.3 & 45.9 & 35.7 & 34.3 & 56.8 & 85.7 & 14.4 & 68.5 & 75.5 & 33.7 & 47.0\\
\Xhline{1pt}
\multirow{3}{*}{\makecell{w/. \\ Pruning}}
& First $p$ & 50\%
& 15.5 & 44.0 & 32.4 & 32.4 & 53.2 & 78.7 & 15.2 & 61.6 & 72.1 & 26.1 & 43.1\\
& IDF & 50\%
& 17.0 & 45.1 & 34.0 & 32.6 & 55.0 & 85.3 & 15.5 & 64.8 & 71.4 & 26.2 & 44.7\\
& Att. Score & 50\%
& 16.7 & 44.4 & 33.6 & 32.6 & 54.7 & 84.4 & 15.5 & 64.0 & 70.0 & 26.2 & 44.2\\
\Xhline{1pt}
\multirow{2}{*}{\makecell{\small{Ours} \\ \small{$\alpha\mathcal{L}_r^{(L1)}$}}}
& $\alpha=0.01$ & -
& 18.6 & 44.3 & 34.3 & 33.8$^\dagger$ & 54.0 & 84.5 & 14.2$^\dagger$ & 67.0$^\dagger$ & 71.6 & 31.8 & 45.4\\
& Remain \% & 
& \textcolor{gray}{34\%} & \textcolor{gray}{39\%} & \textcolor{gray}{30\%} & \textcolor{gray}{29\%} & \textcolor{gray}{35\%} & \textcolor{gray}{76\%} & \textcolor{gray}{32\%} & \textcolor{gray}{30\%} & \textcolor{gray}{32\%} & \textcolor{gray}{28\%} & \textcolor{gray}{39\%}\\
\Xhline{0.8pt}
\multirow{4}{*}{\makecell{\small{Ours} \\ \small{$\alpha\mathcal{L}_r^{(sim)}$}}}
& $\alpha=0.8$ & -
& 19.6 & 42.9 & 35.0$^\dagger$ & 33.8$^\dagger$ & 54.7 & 82.6 & 13.9 & 66.9$^\dagger$ & 74.3$^\dagger$ & 33.1$^\dagger$ & 45.7 \\
& Remain \% &  
& \textcolor{gray}{36\%} & \textcolor{gray}{34\%} & \textcolor{gray}{36\%} & \textcolor{gray}{38\%} & \textcolor{gray}{36\%} & \textcolor{gray}{40\%} & \textcolor{gray}{36\%} & \textcolor{gray}{38\%} & \textcolor{gray}{37\%} & \textcolor{gray}{40\%} & \textcolor{gray}{37\%}\\
\cline{2-14}{}
& $\alpha=0.1$ & -
& 20.2$^\dagger$  & 44.7 & 35.0$^\dagger$  & 34.4$^\dagger$ & 55.5 & 84.4 & 14.0 & 67.4$^\dagger$  & 71.1 & 32.8$^\dagger$  & 45.9\\
& Remain \% & 
& \textcolor{gray}{60\%} & \textcolor{gray}{58\%} & \textcolor{gray}{57\%} & \textcolor{gray}{64\%} & \textcolor{gray}{57\%} & \textcolor{gray}{58\%} & \textcolor{gray}{58\%} & \textcolor{gray}{69\%} & \textcolor{gray}{65\%} & \textcolor{gray}{65\%} & \textcolor{gray}{61\%}\\
\Xhline{1.5pt}
\end{tabular}
\end{table*}

\subsection{Out-of Domain Retrieval} \label{sec:OoD}

We now turn to the problem of generalization by evaluating our models on out-of-domain datasets (BEIR and LoTTE). We selected the best configurations, that is, the models having the best trade-off between the pruning ratio and the evaluation score, that is (1) L1 norm regularization ($\alpha=0.01$) with norm threshold of $\theta_{N}=0.5$; (2) document similarity regularization ($\alpha=0.8$) with $\theta_{LP}=0.7$. We also include the checkpoint of document similarity regularization ($\alpha=0.1$) with $\theta_{LP}=0.7$ since this is the best with respect to IR metrics (on MS MARCO dev small).

The results are presented in Tables \ref{tab:beir} and~\ref{tab:lotte}. Note that AligneR~\cite{AligneR} only reports detailed results when not pruning, so we do not include their results here.
In general, our models perform slightly worse than the ColBERTv2 re-ranking version (nDCG@10 45.9 vs. 47.0 on BEIR, and Success@5 71.3 vs. 72.0 on LoTTE even for our best performing model). The difference is larger than the in-domain evaluation tasks (-0.8\% on dev-small vs. -2.8\% on BEIR, and -1.1\% on LoTTE on model trained on document similarity regularization with $\alpha=0.8$). Increasing the remaining ratio around 60\% allows to preserve the performance. 

Furthermore, for the same model, on the out-of-domain task, more tokens remain than the in-domain task~(e.g. 33\% in-domain vs. 39\% on BEIR and 29\% on LoTTE on a model trained on document similarity regularization with $\alpha=0.8$). This was expected since when regularizing we modify the distribution of the tokens: The model mainly learns to distinguish which tokens are important for the in-domain dataset. However, the performance in all cases is still much better than the SPLADEv2 baseline.



\begin{table*}[h!] 
    \centering
    \caption{A Qualitative analysis of our two proposed pruning strategy based on the best model trained on our designed regularizations. The remaining tokens are in \textbf{bold}. }
    \renewcommand{\arraystretch}{1.5}
    \begin{tabular}{m{1.8cm} | m{7cm} | m{7cm}}
        \hline
        \textbf{Example \newline passage} & Norm pruning: $\theta_N=0.5$ & LP pruning: $\theta_{LP} = 0.7$ \\ 
        \hline
        Doc-Sim \newline Regularization &
        What is the \textbf{Population} of \textbf{New Delhi}. According to \textbf{estimated} figures from \textbf{Census} of India, \textbf{Population} of New Delhi in \textbf{2016} is 18.6 million. Delhi's is witnessing a huge \textbf{growth} in its population every \textbf{year}. \textbf{Population} of Delhi \textbf{city} is estimated to cross 25 million in \textbf{2020}. &
        What is the \textbf{Population} of \textbf{New Delhi}. According to \textbf{estimated figures} from \textbf{Census} of India, \textbf{Population} of New Delhi in \textbf{2016} is 18.6 million. Delhi's is witnessing a huge \textbf{growth} in its population every \textbf{year}. Population of Delhi city is estimated to cross 25 million in \textbf{2020}. \\ 
        \hline
        L1-Norm \newline Regularization &
        What is the \textbf{Population} of \textbf{New Delhi}. According to \textbf{estimated figures} from \textbf{Census} of India, \textbf{Population} of New Delhi in \textbf{2016} is 18.6 million. Delhi's is witnessing a huge \textbf{growth} in its population every \textbf{year}. \textbf{Population} of Delhi \textbf{city} is estimated to cross 25 \textbf{million} in \textbf{2020}. &
        What is the \textbf{Population} of \textbf{New Delhi}. According \textbf{to estimated figures} from \textbf{Census of India}, \textbf{Population of New Delhi in 2016} is \textbf{18.6 million}. Delhi's is witnessing a \textbf{huge growth in} its \textbf{population} every \textbf{year}. \textbf{Population} of \textbf{Delhi city} is \textbf{estimated} to \textbf{cross} 25 \textbf{million} in \textbf{2020}. \\
        \hline
    \end{tabular}
    \label{tab:interpretablity}
\end{table*}

\subsection{IR/MLM Checkpoint Training Analysis} \label{sec:Scratch}

In this section, we report experimental results where we compare whether training from an MLM-based checkpoint (i.e., the DistilBERT checkpoint) is better or worse than starting from an IR model checkpoint (i.e., a DistilBERT-based ColBERTv2 checkpoint). 
We use DistilBERT~\cite{distilbert}  \href{https://huggingface.co/distilbert/distilbert-base-uncased}{checkpoint (HuggingFace)} since it is much smaller than a BERT model (2x speed gain) while providing competitive results~\cite{SPLADEv2, tas-balanced}. 

We fine-tune DistilBERT using the ColBERTv2 IR loss only (Eq.~\eqref{eq:ir-loss}), resulting in a model we name DistilColBERT. From this checkpoint, we fine-tune our model using our proposed regularizations (IR checkpoint), and compare this to a model trained from the start with the different regularizations (MLM checkpoint).
%
Following the previous sections, we select the model with the best harmonic mean (H2) between the pruning ratio and MRR@10. 

The results are reported in Table~\ref{tab:scratch_res} and show that there is no clear best methodology. For the two models tested, the pruning ratio is higher when using the IR checkpoint, but the overall performance (DL19 and DL20) is worse. There is a slight advantage in using an MLM checkpoint with similarity-based regularization, but more experiments are needed to validate this point.


\subsection{Interpretability Analysis} \label{sec:interpretablity}

To conclude our section of experimental results, in Table \ref{tab:interpretablity}, we show an example of different pruning for a passage from MS MARCO v1. 
In general, we observe that the remaining tokens are mostly the important nouns and verbs, which is consistent with human intuition that these tokens play a more important role in the semantic matching.
We also note that, for similarity-based regularization, LP-based pruning removes the second occurrence of ``Population'', but norm-based pruning does not: this showcases the benefit of using LP-based pruning. Likewise, ``city'' is discarded with LP-based pruning as its semantic meaning is already covered by ``New Dehli''.


\section{Conclusion, Limitations and Future Work} \label{sec:conclusion}

In this paper, we proposed a principled framework, that of vector dominance, that defines exactly which tokens can be removed from the set of document vectors in a late interaction model like ColBERT, without any impact on the score function. Based on this concept, we adapted the ColBERT model scoring function and proposed various regularization and pruning strategies. Through our experiments, we found that our methods can remove around 70\% of document vectors while preserving good both in-domain and out-of-domain performance. However, our work still has some limitations that we discuss in the following. 

First, our current pipeline relies on the SPLADEv2 reranking. Although it provides a good result, the reranking pipeline requires more index storage than the end-to-end scenario. Adapting recent works like~\cite{EMVB} to our use case (i.e. handling varying norm document tokens) would allow the model to reach a better space efficiency.


Second, although we have provided some techniques to lower to computational cost of the LP procedure, they are still too costly. However, we did not leverage the fact that our LP problem has a lot of symmetries, since for the same document the problems are very similar. An alternative would be to design approximate pruning where the approximation is controlled -- and this could be leveraged during training and pruning.

Third, regularization strategies are loosely related to the ratio of dominance, which should be the main target. Designing better regularization functions could substantially increase the performance of our methodology. This could be combined with adapting the Transformer architecture so it is more prone to produce dominated vectors, or by using better pretraining than MLM -- better pre-training can lead to substantial performance increase~\cite{gaoCondenserPretrainingArchitecture2021,maEnhancingSparseRetrieval2023}.
 




\begin{acks}
The authors acknowledge ANR - FRANCE (French National Research Agency) for its financial support of the GUIDANCE project n°ANR-23-IAS1-0003. 
This work was granted access to the HPC resources of IDRIS under the allocation made by GENCI.
\end{acks}
\bibliographystyle{ACM-Reference-Format}
\bibliography{main.bib}


\begin{thebibliography}{49}


\ifx \showCODEN    \undefined \def \showCODEN     #1{\unskip}     \fi
\ifx \showDOI      \undefined \def \showDOI       #1{#1}\fi
\ifx \showISBNx    \undefined \def \showISBNx     #1{\unskip}     \fi
\ifx \showISBNxiii \undefined \def \showISBNxiii  #1{\unskip}     \fi
\ifx \showISSN     \undefined \def \showISSN      #1{\unskip}     \fi
\ifx \showLCCN     \undefined \def \showLCCN      #1{\unskip}     \fi
\ifx \shownote     \undefined \def \shownote      #1{#1}          \fi
\ifx \showarticletitle \undefined \def \showarticletitle #1{#1}   \fi
\ifx \showURL      \undefined \def \showURL       {\relax}        \fi
\providecommand\bibfield[2]{#2}
\providecommand\bibinfo[2]{#2}
\providecommand\natexlab[1]{#1}
\providecommand\showeprint[2][]{arXiv:#2}

\bibitem[Acquavia et~al\mbox{.}(2023)]%
        {static_pruning}
\bibfield{author}{\bibinfo{person}{Antonio Acquavia}, \bibinfo{person}{Craig Macdonald}, {and} \bibinfo{person}{Nicola Tonellotto}.} \bibinfo{year}{2023}\natexlab{}.
\newblock \showarticletitle{Static Pruning for Multi-Representation Dense Retrieval}. In \bibinfo{booktitle}{\emph{Proceedings of the ACM Symposium on Document Engineering 2023}}. \bibinfo{publisher}{ACM}, \bibinfo{address}{Limerick Ireland}, \bibinfo{pages}{1–10}.
\newblock
\showISBNx{9798400700279}
\urldef\tempurl%
\url{https://doi.org/10.1145/3573128.3604896}
\showDOI{\tempurl}


\bibitem[Bajaj et~al\mbox{.}(2018)]%
        {MSMARCO}
\bibfield{author}{\bibinfo{person}{Payal Bajaj}, \bibinfo{person}{Daniel Campos}, \bibinfo{person}{Nick Craswell}, \bibinfo{person}{Li Deng}, \bibinfo{person}{Jianfeng Gao}, \bibinfo{person}{Xiaodong Liu}, \bibinfo{person}{Rangan Majumder}, \bibinfo{person}{Andrew McNamara}, \bibinfo{person}{Bhaskar Mitra}, \bibinfo{person}{Tri Nguyen}, \bibinfo{person}{Mir Rosenberg}, \bibinfo{person}{Xia Song}, \bibinfo{person}{Alina Stoica}, \bibinfo{person}{Saurabh Tiwary}, {and} \bibinfo{person}{Tong Wang}.} \bibinfo{year}{2018}\natexlab{}.
\newblock \bibinfo{title}{MS MARCO: A Human Generated MAchine Reading COmprehension Dataset}.
\newblock
\newblock
\showeprint[arxiv]{1611.09268}~[cs.CL]
\urldef\tempurl%
\url{https://arxiv.org/abs/1611.09268}
\showURL{%
\tempurl}


\bibitem[Cheung et~al\mbox{.}(2008)]%
        {Condition_number_for_lp_1}
\bibfield{author}{\bibinfo{person}{Dennis Cheung}, \bibinfo{person}{Felipe Cucker}, {and} \bibinfo{person}{Javier Pena}.} \bibinfo{year}{2008}\natexlab{}.
\newblock \showarticletitle{A Condition Number for Multifold Conic Systems}.
\newblock \bibinfo{journal}{\emph{SIAM Journal on Optimization}}  \bibinfo{volume}{19} (\bibinfo{date}{01} \bibinfo{year}{2008}), \bibinfo{pages}{261--280}.
\newblock
\urldef\tempurl%
\url{https://doi.org/10.1137/060665427}
\showDOI{\tempurl}


\bibitem[Cheung et~al\mbox{.}(2003)]%
        {Condition_number_for_lp_3}
\bibfield{author}{\bibinfo{person}{Dennis Cheung}, \bibinfo{person}{Felipe Cucker}, {and} \bibinfo{person}{Javier Peña}.} \bibinfo{year}{2003}\natexlab{}.
\newblock \showarticletitle{Unifying Condition Numbers for Linear Programming}.
\newblock \bibinfo{journal}{\emph{Mathematics of Operations Research}} \bibinfo{volume}{28}, \bibinfo{number}{4} (\bibinfo{year}{2003}), \bibinfo{pages}{609--624}.
\newblock
\showISSN{0364765X, 15265471}
\urldef\tempurl%
\url{http://www.jstor.org/stable/4127002}
\showURL{%
\tempurl}


\bibitem[Cohen et~al\mbox{.}(2019)]%
        {LP_solve_1}
\bibfield{author}{\bibinfo{person}{Michael~B. Cohen}, \bibinfo{person}{Yin~Tat Lee}, {and} \bibinfo{person}{Zhao Song}.} \bibinfo{year}{2019}\natexlab{}.
\newblock \showarticletitle{Solving linear programs in the current matrix multiplication time}. In \bibinfo{booktitle}{\emph{Proceedings of the 51st Annual ACM SIGACT Symposium on Theory of Computing}} (Phoenix, AZ, USA) \emph{(\bibinfo{series}{STOC 2019})}. \bibinfo{publisher}{Association for Computing Machinery}, \bibinfo{address}{New York, NY, USA}, \bibinfo{pages}{938–942}.
\newblock
\showISBNx{9781450367059}
\urldef\tempurl%
\url{https://doi.org/10.1145/3313276.3316303}
\showDOI{\tempurl}


\bibitem[Devlin et~al\mbox{.}(2019)]%
        {BERT}
\bibfield{author}{\bibinfo{person}{Jacob Devlin}, \bibinfo{person}{Ming-Wei Chang}, \bibinfo{person}{Kenton Lee}, {and} \bibinfo{person}{Kristina Toutanova}.} \bibinfo{year}{2019}\natexlab{}.
\newblock \showarticletitle{{BERT}: Pre-training of Deep Bidirectional Transformers for Language Understanding}. In \bibinfo{booktitle}{\emph{Proceedings of the 2019 Conference of the North {A}merican Chapter of the Association for Computational Linguistics: Human Language Technologies, Volume 1 (Long and Short Papers)}}, \bibfield{editor}{\bibinfo{person}{Jill Burstein}, \bibinfo{person}{Christy Doran}, {and} \bibinfo{person}{Thamar Solorio}} (Eds.). \bibinfo{publisher}{Association for Computational Linguistics}, \bibinfo{address}{Minneapolis, Minnesota}, \bibinfo{pages}{4171--4186}.
\newblock
\urldef\tempurl%
\url{https://doi.org/10.18653/v1/N19-1423}
\showDOI{\tempurl}


\bibitem[Dhulipala et~al\mbox{.}(2024)]%
        {MUVERA}
\bibfield{author}{\bibinfo{person}{Laxman Dhulipala}, \bibinfo{person}{Majid Hadian}, \bibinfo{person}{Rajesh Jayaram}, \bibinfo{person}{Jason Lee}, {and} \bibinfo{person}{Vahab Mirrokni}.} \bibinfo{year}{2024}\natexlab{}.
\newblock \showarticletitle{MUVERA: Multi-Vector Retrieval via Fixed Dimensional Encoding}. In \bibinfo{booktitle}{\emph{Advances in Neural Information Processing Systems}}, \bibfield{editor}{\bibinfo{person}{A.~Globerson}, \bibinfo{person}{L.~Mackey}, \bibinfo{person}{D.~Belgrave}, \bibinfo{person}{A.~Fan}, \bibinfo{person}{U.~Paquet}, \bibinfo{person}{J.~Tomczak}, {and} \bibinfo{person}{C.~Zhang}} (Eds.), Vol.~\bibinfo{volume}{37}. \bibinfo{publisher}{Curran Associates, Inc.}, \bibinfo{address}{Vancouver, Canada}, \bibinfo{pages}{101042--101073}.
\newblock
\urldef\tempurl%
\url{https://proceedings.neurips.cc/paper_files/paper/2024/file/b71cfefae46909178603b5bc6c11d3ae-Paper-Conference.pdf}
\showURL{%
\tempurl}


\bibitem[Fan et~al\mbox{.}(2023)]%
        {COILcr}
\bibfield{author}{\bibinfo{person}{Zhen Fan}, \bibinfo{person}{Luyu Gao}, \bibinfo{person}{Rohan Jha}, {and} \bibinfo{person}{Jamie Callan}.} \bibinfo{year}{2023}\natexlab{}.
\newblock \showarticletitle{COILcr: Efficient Semantic Matching in Contextualized Exact Match Retrieval}. In \bibinfo{booktitle}{\emph{Advances in Information Retrieval: 45th European Conference on Information Retrieval, ECIR 2023, Dublin, Ireland, April 2–6, 2023, Proceedings, Part I}} (Dublin, Ireland). \bibinfo{publisher}{Springer-Verlag}, \bibinfo{address}{Berlin, Heidelberg}, \bibinfo{pages}{298–312}.
\newblock
\showISBNx{978-3-031-28243-0}
\urldef\tempurl%
\url{https://doi.org/10.1007/978-3-031-28244-7_19}
\showDOI{\tempurl}


\bibitem[Fearnley and Savani(2015)]%
        {simplex}
\bibfield{author}{\bibinfo{person}{John Fearnley} {and} \bibinfo{person}{Rahul Savani}.} \bibinfo{year}{2015}\natexlab{}.
\newblock \showarticletitle{The Complexity of the Simplex Method}. In \bibinfo{booktitle}{\emph{Proceedings of the Forty-Seventh Annual ACM Symposium on Theory of Computing}} (Portland, Oregon, USA) \emph{(\bibinfo{series}{STOC '15})}. \bibinfo{publisher}{Association for Computing Machinery}, \bibinfo{address}{New York, NY, USA}, \bibinfo{pages}{201–208}.
\newblock
\showISBNx{9781450335362}
\urldef\tempurl%
\url{https://doi.org/10.1145/2746539.2746558}
\showDOI{\tempurl}


\bibitem[Formal et~al\mbox{.}(2024)]%
        {SPLATE}
\bibfield{author}{\bibinfo{person}{Thibault Formal}, \bibinfo{person}{Stéphane Clinchant}, \bibinfo{person}{Hervé Déjean}, {and} \bibinfo{person}{Carlos Lassance}.} \bibinfo{year}{2024}\natexlab{}.
\newblock \bibinfo{title}{SPLATE: Sparse Late Interaction Retrieval}.
\newblock
\newblock
\showeprint[arxiv]{2404.13950}~[cs.IR]
\urldef\tempurl%
\url{https://arxiv.org/abs/2404.13950}
\showURL{%
\tempurl}


\bibitem[Formal et~al\mbox{.}(2021)]%
        {SPLADEv2}
\bibfield{author}{\bibinfo{person}{Thibault Formal}, \bibinfo{person}{Carlos Lassance}, \bibinfo{person}{Benjamin Piwowarski}, {and} \bibinfo{person}{Stéphane Clinchant}.} \bibinfo{year}{2021}\natexlab{}.
\newblock \bibinfo{title}{SPLADE v2: Sparse Lexical and Expansion Model for Information Retrieval}.
\newblock
\newblock
\showeprint[arxiv]{2109.10086}~[cs.IR]
\urldef\tempurl%
\url{https://arxiv.org/abs/2109.10086}
\showURL{%
\tempurl}


\bibitem[Freund and Vera(2009)]%
        {Condition_number_for_lp_2}
\bibfield{author}{\bibinfo{person}{Robert~M. Freund} {and} \bibinfo{person}{Jorge~R. Vera}.} \bibinfo{year}{2009}\natexlab{}.
\newblock \showarticletitle{Equivalence of Convex Problem Geometry and Computational Complexity in the Separation Oracle Model}.
\newblock \bibinfo{journal}{\emph{Mathematics of Operations Research}} \bibinfo{volume}{34}, \bibinfo{number}{4} (\bibinfo{date}{Nov.} \bibinfo{year}{2009}), \bibinfo{pages}{869–879}.
\newblock
\showISSN{0364-765X, 1526-5471}
\urldef\tempurl%
\url{https://doi.org/10.1287/moor.1090.0408}
\showDOI{\tempurl}


\bibitem[Gao and Callan(2021)]%
        {gaoCondenserPretrainingArchitecture2021}
\bibfield{author}{\bibinfo{person}{Luyu Gao} {and} \bibinfo{person}{Jamie Callan}.} \bibinfo{year}{2021}\natexlab{}.
\newblock \showarticletitle{Condenser: a Pre-training Architecture for Dense Retrieval}. In \bibinfo{booktitle}{\emph{Proceedings of the 2021 Conference on Empirical Methods in Natural Language Processing}}, \bibfield{editor}{\bibinfo{person}{Marie-Francine Moens}, \bibinfo{person}{Xuanjing Huang}, \bibinfo{person}{Lucia Specia}, {and} \bibinfo{person}{Scott Wen-tau Yih}} (Eds.). \bibinfo{publisher}{Association for Computational Linguistics}, \bibinfo{address}{Online and Punta Cana, Dominican Republic}, \bibinfo{pages}{981--993}.
\newblock
\urldef\tempurl%
\url{https://doi.org/10.18653/v1/2021.emnlp-main.75}
\showDOI{\tempurl}


\bibitem[Gao et~al\mbox{.}(2021)]%
        {COIL}
\bibfield{author}{\bibinfo{person}{Luyu Gao}, \bibinfo{person}{Zhuyun Dai}, {and} \bibinfo{person}{Jamie Callan}.} \bibinfo{year}{2021}\natexlab{}.
\newblock \showarticletitle{{COIL}: Revisit Exact Lexical Match in Information Retrieval with Contextualized Inverted List}. In \bibinfo{booktitle}{\emph{Proceedings of the 2021 Conference of the North American Chapter of the Association for Computational Linguistics: Human Language Technologies}}, \bibfield{editor}{\bibinfo{person}{Kristina Toutanova}, \bibinfo{person}{Anna Rumshisky}, \bibinfo{person}{Luke Zettlemoyer}, \bibinfo{person}{Dilek Hakkani-Tur}, \bibinfo{person}{Iz~Beltagy}, \bibinfo{person}{Steven Bethard}, \bibinfo{person}{Ryan Cotterell}, \bibinfo{person}{Tanmoy Chakraborty}, {and} \bibinfo{person}{Yichao Zhou}} (Eds.). \bibinfo{publisher}{Association for Computational Linguistics}, \bibinfo{address}{Online}, \bibinfo{pages}{3030--3042}.
\newblock
\urldef\tempurl%
\url{https://doi.org/10.18653/v1/2021.naacl-main.241}
\showDOI{\tempurl}


\bibitem[Garg and Mermin(1984)]%
        {Farkas}
\bibfield{author}{\bibinfo{person}{Anupam Garg} {and} \bibinfo{person}{N.~D. Mermin}.} \bibinfo{year}{1984}\natexlab{}.
\newblock \showarticletitle{Farkas’s Lemma and the nature of reality: Statistical implications of quantum correlations}.
\newblock \bibinfo{journal}{\emph{Foundations of Physics}} \bibinfo{volume}{14}, \bibinfo{number}{1} (\bibinfo{date}{Jan.} \bibinfo{year}{1984}), \bibinfo{pages}{1–39}.
\newblock
\showISSN{0015-9018, 1572-9516}
\urldef\tempurl%
\url{https://doi.org/10.1007/BF00741645}
\showDOI{\tempurl}


\bibitem[Godey et~al\mbox{.}(2024)]%
        {godeyAnisotropyInherentSelfAttention2024}
\bibfield{author}{\bibinfo{person}{Nathan Godey}, \bibinfo{person}{{\'E}ric Clergerie}, {and} \bibinfo{person}{Beno{\^i}t Sagot}.} \bibinfo{year}{2024}\natexlab{}.
\newblock \showarticletitle{Anisotropy Is Inherent to Self-Attention in Transformers}. In \bibinfo{booktitle}{\emph{Proceedings of the 18th Conference of the European Chapter of the Association for Computational Linguistics (Volume 1: Long Papers)}}, \bibfield{editor}{\bibinfo{person}{Yvette Graham} {and} \bibinfo{person}{Matthew Purver}} (Eds.). \bibinfo{publisher}{Association for Computational Linguistics}, \bibinfo{address}{St. Julian{'}s, Malta}, \bibinfo{pages}{35--48}.
\newblock
\urldef\tempurl%
\url{https://aclanthology.org/2024.eacl-long.3/}
\showURL{%
\tempurl}


\bibitem[Guo et~al\mbox{.}(2016)]%
        {DRMM}
\bibfield{author}{\bibinfo{person}{Jiafeng Guo}, \bibinfo{person}{Yixing Fan}, \bibinfo{person}{Qingyao Ai}, {and} \bibinfo{person}{W.~Bruce Croft}.} \bibinfo{year}{2016}\natexlab{}.
\newblock \showarticletitle{A Deep Relevance Matching Model for Ad-hoc Retrieval}. In \bibinfo{booktitle}{\emph{Proceedings of the 25th ACM International on Conference on Information and Knowledge Management}} (Indianapolis, Indiana, USA) \emph{(\bibinfo{series}{CIKM '16})}. \bibinfo{publisher}{Association for Computing Machinery}, \bibinfo{address}{New York, NY, USA}, \bibinfo{pages}{55–64}.
\newblock
\showISBNx{9781450340731}
\urldef\tempurl%
\url{https://doi.org/10.1145/2983323.2983769}
\showDOI{\tempurl}


\bibitem[He et~al\mbox{.}(2025)]%
        {LEAP_MV}
\bibfield{author}{\bibinfo{person}{Shanxiu He}, \bibinfo{person}{Mutasem Al-Darabsah}, \bibinfo{person}{Suraj Nair}, \bibinfo{person}{Jonathan May}, \bibinfo{person}{Tarun Agarwal}, \bibinfo{person}{Tao Yang}, {and} \bibinfo{person}{Choon~Hui Teo}.} \bibinfo{year}{2025}\natexlab{}.
\newblock \showarticletitle{Token Pruning Optimization for Efficient Multi-vector Dense Retrieval}. In \bibinfo{booktitle}{\emph{Advances in Information Retrieval: 47th European Conference on Information Retrieval, ECIR 2025, Lucca, Italy, April 6–10, 2025, Proceedings, Part I}} (Lucca, Italy). \bibinfo{publisher}{Springer-Verlag}, \bibinfo{address}{Berlin, Heidelberg}, \bibinfo{pages}{101–115}.
\newblock
\showISBNx{978-3-031-88707-9}
\urldef\tempurl%
\url{https://doi.org/10.1007/978-3-031-88708-6_7}
\showDOI{\tempurl}


\bibitem[Hinton et~al\mbox{.}(2015)]%
        {distillation}
\bibfield{author}{\bibinfo{person}{Geoffrey Hinton}, \bibinfo{person}{Oriol Vinyals}, {and} \bibinfo{person}{Jeff Dean}.} \bibinfo{year}{2015}\natexlab{}.
\newblock \bibinfo{title}{Distilling the Knowledge in a Neural Network}.
\newblock
\newblock
\showeprint[arxiv]{1503.02531}~[stat.ML]
\urldef\tempurl%
\url{https://arxiv.org/abs/1503.02531}
\showURL{%
\tempurl}


\bibitem[Hofst\"{a}tter et~al\mbox{.}(2022)]%
        {ColBERTer}
\bibfield{author}{\bibinfo{person}{Sebastian Hofst\"{a}tter}, \bibinfo{person}{Omar Khattab}, \bibinfo{person}{Sophia Althammer}, \bibinfo{person}{Mete Sertkan}, {and} \bibinfo{person}{Allan Hanbury}.} \bibinfo{year}{2022}\natexlab{}.
\newblock \showarticletitle{Introducing Neural Bag of Whole-Words with ColBERTer: Contextualized Late Interactions using Enhanced Reduction}. In \bibinfo{booktitle}{\emph{Proceedings of the 31st ACM International Conference on Information \& Knowledge Management}} (Atlanta, GA, USA) \emph{(\bibinfo{series}{CIKM '22})}. \bibinfo{publisher}{Association for Computing Machinery}, \bibinfo{address}{New York, NY, USA}, \bibinfo{pages}{737–747}.
\newblock
\showISBNx{9781450392365}
\urldef\tempurl%
\url{https://doi.org/10.1145/3511808.3557367}
\showDOI{\tempurl}


\bibitem[Hofst\"{a}tter et~al\mbox{.}(2021)]%
        {tas-balanced}
\bibfield{author}{\bibinfo{person}{Sebastian Hofst\"{a}tter}, \bibinfo{person}{Sheng-Chieh Lin}, \bibinfo{person}{Jheng-Hong Yang}, \bibinfo{person}{Jimmy Lin}, {and} \bibinfo{person}{Allan Hanbury}.} \bibinfo{year}{2021}\natexlab{}.
\newblock \showarticletitle{Efficiently Teaching an Effective Dense Retriever with Balanced Topic Aware Sampling}. In \bibinfo{booktitle}{\emph{Proceedings of the 44th International ACM SIGIR Conference on Research and Development in Information Retrieval}} (Virtual Event, Canada) \emph{(\bibinfo{series}{SIGIR '21})}. \bibinfo{publisher}{Association for Computing Machinery}, \bibinfo{address}{New York, NY, USA}, \bibinfo{pages}{113–122}.
\newblock
\showISBNx{9781450380379}
\urldef\tempurl%
\url{https://doi.org/10.1145/3404835.3462891}
\showDOI{\tempurl}


\bibitem[Hui et~al\mbox{.}(2018)]%
        {Co-PACRR}
\bibfield{author}{\bibinfo{person}{Kai Hui}, \bibinfo{person}{Andrew Yates}, \bibinfo{person}{Klaus Berberich}, {and} \bibinfo{person}{Gerard de Melo}.} \bibinfo{year}{2018}\natexlab{}.
\newblock \showarticletitle{Co-PACRR: A Context-Aware Neural IR Model for Ad-hoc Retrieval}. In \bibinfo{booktitle}{\emph{Proceedings of the Eleventh ACM International Conference on Web Search and Data Mining}} (Marina Del Rey, CA, USA) \emph{(\bibinfo{series}{WSDM '18})}. \bibinfo{publisher}{Association for Computing Machinery}, \bibinfo{address}{New York, NY, USA}, \bibinfo{pages}{279–287}.
\newblock
\showISBNx{9781450355810}
\urldef\tempurl%
\url{https://doi.org/10.1145/3159652.3159689}
\showDOI{\tempurl}


\bibitem[Johnson et~al\mbox{.}(2021)]%
        {faiss}
\bibfield{author}{\bibinfo{person}{Jeff Johnson}, \bibinfo{person}{Matthijs Douze}, {and} \bibinfo{person}{Hervé Jégou}.} \bibinfo{year}{2021}\natexlab{}.
\newblock \showarticletitle{Billion-Scale Similarity Search with GPUs}.
\newblock \bibinfo{journal}{\emph{IEEE Transactions on Big Data}} \bibinfo{volume}{7}, \bibinfo{number}{3} (\bibinfo{year}{2021}), \bibinfo{pages}{535--547}.
\newblock
\urldef\tempurl%
\url{https://doi.org/10.1109/TBDATA.2019.2921572}
\showDOI{\tempurl}


\bibitem[Karpukhin et~al\mbox{.}(2020)]%
        {DPR}
\bibfield{author}{\bibinfo{person}{Vladimir Karpukhin}, \bibinfo{person}{Barlas Oguz}, \bibinfo{person}{Sewon Min}, \bibinfo{person}{Patrick Lewis}, \bibinfo{person}{Ledell Wu}, \bibinfo{person}{Sergey Edunov}, \bibinfo{person}{Danqi Chen}, {and} \bibinfo{person}{Wen-tau Yih}.} \bibinfo{year}{2020}\natexlab{}.
\newblock \showarticletitle{Dense Passage Retrieval for Open-Domain Question Answering}. In \bibinfo{booktitle}{\emph{Proceedings of the 2020 Conference on Empirical Methods in Natural Language Processing (EMNLP)}}, \bibfield{editor}{\bibinfo{person}{Bonnie Webber}, \bibinfo{person}{Trevor Cohn}, \bibinfo{person}{Yulan He}, {and} \bibinfo{person}{Yang Liu}} (Eds.). \bibinfo{publisher}{Association for Computational Linguistics}, \bibinfo{address}{Online}, \bibinfo{pages}{6769--6781}.
\newblock
\urldef\tempurl%
\url{https://doi.org/10.18653/v1/2020.emnlp-main.550}
\showDOI{\tempurl}


\bibitem[Khattab and Zaharia(2020)]%
        {ColBERT}
\bibfield{author}{\bibinfo{person}{Omar Khattab} {and} \bibinfo{person}{Matei Zaharia}.} \bibinfo{year}{2020}\natexlab{}.
\newblock \showarticletitle{ColBERT: Efficient and Effective Passage Search via Contextualized Late Interaction over BERT}. In \bibinfo{booktitle}{\emph{Proceedings of the 43rd International ACM SIGIR Conference on Research and Development in Information Retrieval}} (Virtual Event, China) \emph{(\bibinfo{series}{SIGIR '20})}. \bibinfo{publisher}{Association for Computing Machinery}, \bibinfo{address}{New York, NY, USA}, \bibinfo{pages}{39–48}.
\newblock
\showISBNx{9781450380164}
\urldef\tempurl%
\url{https://doi.org/10.1145/3397271.3401075}
\showDOI{\tempurl}


\bibitem[Lassance et~al\mbox{.}(2024)]%
        {Spladev3}
\bibfield{author}{\bibinfo{person}{Carlos Lassance}, \bibinfo{person}{Hervé Déjean}, \bibinfo{person}{Thibault Formal}, {and} \bibinfo{person}{Stéphane Clinchant}.} \bibinfo{year}{2024}\natexlab{}.
\newblock \bibinfo{title}{SPLADE-v3: New baselines for SPLADE}.
\newblock
\newblock
\showeprint[arxiv]{2403.06789}~[cs.IR]
\urldef\tempurl%
\url{https://arxiv.org/abs/2403.06789}
\showURL{%
\tempurl}


\bibitem[Lassance et~al\mbox{.}(2022)]%
        {colbert_pruning}
\bibfield{author}{\bibinfo{person}{Carlos Lassance}, \bibinfo{person}{Maroua Maachou}, \bibinfo{person}{Joohee Park}, {and} \bibinfo{person}{St\'{e}phane Clinchant}.} \bibinfo{year}{2022}\natexlab{}.
\newblock \showarticletitle{Learned Token Pruning in Contextualized Late Interaction over BERT (ColBERT)}. In \bibinfo{booktitle}{\emph{Proceedings of the 45th International ACM SIGIR Conference on Research and Development in Information Retrieval}} (Madrid, Spain) \emph{(\bibinfo{series}{SIGIR '22})}. \bibinfo{publisher}{Association for Computing Machinery}, \bibinfo{address}{New York, NY, USA}, \bibinfo{pages}{2232–2236}.
\newblock
\showISBNx{9781450387323}
\urldef\tempurl%
\url{https://doi.org/10.1145/3477495.3531835}
\showDOI{\tempurl}


\bibitem[Lee et~al\mbox{.}(2023)]%
        {xtr}
\bibfield{author}{\bibinfo{person}{Jinhyuk Lee}, \bibinfo{person}{Zhuyun Dai}, \bibinfo{person}{Sai Meher~Karthik Duddu}, \bibinfo{person}{Tao Lei}, \bibinfo{person}{Iftekhar Naim}, \bibinfo{person}{Ming-Wei Chang}, {and} \bibinfo{person}{Vincent~Y. Zhao}.} \bibinfo{year}{2023}\natexlab{}.
\newblock \showarticletitle{Rethinking the role of token retrieval in multi-vector retrieval}. In \bibinfo{booktitle}{\emph{Proceedings of the 37th International Conference on Neural Information Processing Systems}} (New Orleans, LA, USA) \emph{(\bibinfo{series}{NIPS '23})}. \bibinfo{publisher}{Curran Associates Inc.}, \bibinfo{address}{Red Hook, NY, USA}, Article \bibinfo{articleno}{677}, \bibinfo{numpages}{22}~pages.
\newblock


\bibitem[Lee et~al\mbox{.}(2019)]%
        {LP_solve_2}
\bibfield{author}{\bibinfo{person}{Yin~Tat Lee}, \bibinfo{person}{Zhao Song}, {and} \bibinfo{person}{Qiuyi Zhang}.} \bibinfo{year}{2019}\natexlab{}.
\newblock \bibinfo{title}{Solving Empirical Risk Minimization in the Current Matrix Multiplication Time}.
\newblock
\newblock
\showeprint[arxiv]{1905.04447}~[cs.DS]
\urldef\tempurl%
\url{https://arxiv.org/abs/1905.04447}
\showURL{%
\tempurl}


\bibitem[Li et~al\mbox{.}(2023a)]%
        {slim}
\bibfield{author}{\bibinfo{person}{Minghan Li}, \bibinfo{person}{Sheng-Chieh Lin}, \bibinfo{person}{Xueguang Ma}, {and} \bibinfo{person}{Jimmy Lin}.} \bibinfo{year}{2023}\natexlab{a}.
\newblock \showarticletitle{SLIM: Sparsified Late Interaction for Multi-Vector Retrieval with Inverted Indexes}. In \bibinfo{booktitle}{\emph{Proceedings of the 46th International ACM SIGIR Conference on Research and Development in Information Retrieval}} (Taipei, Taiwan) \emph{(\bibinfo{series}{SIGIR '23})}. \bibinfo{publisher}{Association for Computing Machinery}, \bibinfo{address}{New York, NY, USA}, \bibinfo{pages}{1954–1959}.
\newblock
\showISBNx{9781450394086}
\urldef\tempurl%
\url{https://doi.org/10.1145/3539618.3591977}
\showDOI{\tempurl}


\bibitem[Li et~al\mbox{.}(2023b)]%
        {citadel}
\bibfield{author}{\bibinfo{person}{Minghan Li}, \bibinfo{person}{Sheng-Chieh Lin}, \bibinfo{person}{Barlas Oguz}, \bibinfo{person}{Asish Ghoshal}, \bibinfo{person}{Jimmy Lin}, \bibinfo{person}{Yashar Mehdad}, \bibinfo{person}{Wen-tau Yih}, {and} \bibinfo{person}{Xilun Chen}.} \bibinfo{year}{2023}\natexlab{b}.
\newblock \showarticletitle{{CITADEL}: Conditional Token Interaction via Dynamic Lexical Routing for Efficient and Effective Multi-Vector Retrieval}. In \bibinfo{booktitle}{\emph{Proceedings of the 61st Annual Meeting of the Association for Computational Linguistics (Volume 1: Long Papers)}}, \bibfield{editor}{\bibinfo{person}{Anna Rogers}, \bibinfo{person}{Jordan Boyd-Graber}, {and} \bibinfo{person}{Naoaki Okazaki}} (Eds.). \bibinfo{publisher}{Association for Computational Linguistics}, \bibinfo{address}{Toronto, Canada}, \bibinfo{pages}{11891--11907}.
\newblock
\urldef\tempurl%
\url{https://doi.org/10.18653/v1/2023.acl-long.663}
\showDOI{\tempurl}


\bibitem[Lin et~al\mbox{.}(2021)]%
        {tct-colbert}
\bibfield{author}{\bibinfo{person}{Sheng-Chieh Lin}, \bibinfo{person}{Jheng-Hong Yang}, {and} \bibinfo{person}{Jimmy Lin}.} \bibinfo{year}{2021}\natexlab{}.
\newblock \showarticletitle{In-Batch Negatives for Knowledge Distillation with Tightly-Coupled Teachers for Dense Retrieval}. In \bibinfo{booktitle}{\emph{Proceedings of the 6th Workshop on Representation Learning for NLP (RepL4NLP-2021)}}, \bibfield{editor}{\bibinfo{person}{Anna Rogers}, \bibinfo{person}{Iacer Calixto}, \bibinfo{person}{Ivan Vuli{\'c}}, \bibinfo{person}{Naomi Saphra}, \bibinfo{person}{Nora Kassner}, \bibinfo{person}{Oana-Maria Camburu}, \bibinfo{person}{Trapit Bansal}, {and} \bibinfo{person}{Vered Shwartz}} (Eds.). \bibinfo{publisher}{Association for Computational Linguistics}, \bibinfo{address}{Online}, \bibinfo{pages}{163--173}.
\newblock
\urldef\tempurl%
\url{https://doi.org/10.18653/v1/2021.repl4nlp-1.17}
\showDOI{\tempurl}


\bibitem[Liu et~al\mbox{.}(2024)]%
        {late_interaction_pruning}
\bibfield{author}{\bibinfo{person}{Qi Liu}, \bibinfo{person}{Gang Guo}, \bibinfo{person}{Jiaxin Mao}, \bibinfo{person}{Zhicheng Dou}, \bibinfo{person}{Ji-Rong Wen}, \bibinfo{person}{Hao Jiang}, \bibinfo{person}{Xinyu Zhang}, {and} \bibinfo{person}{Zhao Cao}.} \bibinfo{year}{2024}\natexlab{}.
\newblock \showarticletitle{An Analysis on Matching Mechanisms and Token Pruning for Late-interaction Models}.
\newblock \bibinfo{journal}{\emph{ACM Transactions on Information Systems}} \bibinfo{volume}{42}, \bibinfo{number}{5} (\bibinfo{date}{Sept.} \bibinfo{year}{2024}), \bibinfo{pages}{1–28}.
\newblock
\showISSN{1046-8188, 1558-2868}
\urldef\tempurl%
\url{https://doi.org/10.1145/3639818}
\showDOI{\tempurl}


\bibitem[Luan et~al\mbox{.}(2021)]%
        {ME-BERT}
\bibfield{author}{\bibinfo{person}{Yi Luan}, \bibinfo{person}{Jacob Eisenstein}, \bibinfo{person}{Kristina Toutanova}, {and} \bibinfo{person}{Michael Collins}.} \bibinfo{year}{2021}\natexlab{}.
\newblock \showarticletitle{Sparse, Dense, and Attentional Representations for Text Retrieval}.
\newblock \bibinfo{journal}{\emph{Transactions of the Association for Computational Linguistics}}  \bibinfo{volume}{9} (\bibinfo{year}{2021}), \bibinfo{pages}{329--345}.
\newblock
\urldef\tempurl%
\url{https://doi.org/10.1162/tacl_a_00369}
\showDOI{\tempurl}


\bibitem[Ma et~al\mbox{.}(2023)]%
        {maEnhancingSparseRetrieval2023}
\bibfield{author}{\bibinfo{person}{Xueguang Ma}, \bibinfo{person}{Hengxin Fun}, \bibinfo{person}{Xusen Yin}, \bibinfo{person}{Antonio Mallia}, {and} \bibinfo{person}{Jimmy Lin}.} \bibinfo{year}{2023}\natexlab{}.
\newblock \showarticletitle{Enhancing Sparse Retrieval via Unsupervised Learning}. In \bibinfo{booktitle}{\emph{Proceedings of the Annual International ACM SIGIR Conference on Research and Development in Information Retrieval in the Asia Pacific Region}} (Beijing, China) \emph{(\bibinfo{series}{SIGIR-AP '23})}. \bibinfo{publisher}{Association for Computing Machinery}, \bibinfo{address}{New York, NY, USA}, \bibinfo{pages}{150–157}.
\newblock
\showISBNx{9798400704086}
\urldef\tempurl%
\url{https://doi.org/10.1145/3624918.3625334}
\showDOI{\tempurl}


\bibitem[MacAvaney et~al\mbox{.}(2025)]%
        {Const-BERT}
\bibfield{author}{\bibinfo{person}{Sean MacAvaney}, \bibinfo{person}{Antonio Mallia}, {and} \bibinfo{person}{Nicola Tonellotto}.} \bibinfo{year}{2025}\natexlab{}.
\newblock \bibinfo{booktitle}{\emph{Efficient Constant-Space Multi-vector Retrieval}}. \bibinfo{series}{Lecture Notes in Computer Science}, Vol.~\bibinfo{volume}{15574}.
\newblock \bibinfo{publisher}{Springer Nature Switzerland}, \bibinfo{address}{Cham}, \bibinfo{pages}{237–245}.
\newblock
\showISBNx{978-3-031-88713-0}
\urldef\tempurl%
\url{https://doi.org/10.1007/978-3-031-88714-7_22}
\showDOI{\tempurl}


\bibitem[Nardini et~al\mbox{.}(2024)]%
        {EMVB}
\bibfield{author}{\bibinfo{person}{Franco~Maria Nardini}, \bibinfo{person}{Cosimo Rulli}, {and} \bibinfo{person}{Rossano Venturini}.} \bibinfo{year}{2024}\natexlab{}.
\newblock \showarticletitle{Efficient Multi-vector Dense Retrieval with Bit Vectors}. In \bibinfo{booktitle}{\emph{Advances in Information Retrieval}}, \bibfield{editor}{\bibinfo{person}{Nazli Goharian}, \bibinfo{person}{Nicola Tonellotto}, \bibinfo{person}{Yulan He}, \bibinfo{person}{Aldo Lipani}, \bibinfo{person}{Graham McDonald}, \bibinfo{person}{Craig Macdonald}, {and} \bibinfo{person}{Iadh Ounis}} (Eds.). \bibinfo{publisher}{Springer Nature Switzerland}, \bibinfo{address}{Cham}, \bibinfo{pages}{3--17}.
\newblock
\showISBNx{978-3-031-56060-6}


\bibitem[Nocedal and Wright(2006)]%
        {nocedalNumericalOptimization2006}
\bibfield{author}{\bibinfo{person}{Jorge Nocedal} {and} \bibinfo{person}{Stephen~J. Wright}.} \bibinfo{year}{2006}\natexlab{}.
\newblock \bibinfo{booktitle}{\emph{Numerical Optimization} (\bibinfo{edition}{2nd ed} ed.)}.
\newblock \bibinfo{publisher}{Springer}, \bibinfo{address}{New York}.
\newblock
\showISBNx{978-0-387-30303-1}


\bibitem[Nogueira and Cho(2020)]%
        {MonoBERT}
\bibfield{author}{\bibinfo{person}{Rodrigo Nogueira} {and} \bibinfo{person}{Kyunghyun Cho}.} \bibinfo{year}{2020}\natexlab{}.
\newblock \bibinfo{title}{Passage Re-ranking with BERT}.
\newblock
\newblock
\showeprint[arxiv]{1901.04085}~[cs.IR]
\urldef\tempurl%
\url{https://arxiv.org/abs/1901.04085}
\showURL{%
\tempurl}


\bibitem[Qian et~al\mbox{.}(2022)]%
        {AligneR}
\bibfield{author}{\bibinfo{person}{Yujie Qian}, \bibinfo{person}{Jinhyuk Lee}, \bibinfo{person}{Sai Meher~Karthik Duddu}, \bibinfo{person}{Zhuyun Dai}, \bibinfo{person}{Siddhartha Brahma}, \bibinfo{person}{Iftekhar Naim}, \bibinfo{person}{Tao Lei}, {and} \bibinfo{person}{Vincent~Y. Zhao}.} \bibinfo{year}{2022}\natexlab{}.
\newblock \bibinfo{title}{Multi-Vector Retrieval as Sparse Alignment}.
\newblock
\newblock
\showeprint[arxiv]{2211.01267}~[cs.CL]
\urldef\tempurl%
\url{https://arxiv.org/abs/2211.01267}
\showURL{%
\tempurl}


\bibitem[Ren et~al\mbox{.}(2021)]%
        {RocketQAv2}
\bibfield{author}{\bibinfo{person}{Ruiyang Ren}, \bibinfo{person}{Yingqi Qu}, \bibinfo{person}{Jing Liu}, \bibinfo{person}{Wayne~Xin Zhao}, \bibinfo{person}{QiaoQiao She}, \bibinfo{person}{Hua Wu}, \bibinfo{person}{Haifeng Wang}, {and} \bibinfo{person}{Ji-Rong Wen}.} \bibinfo{year}{2021}\natexlab{}.
\newblock \showarticletitle{{R}ocket{QA}v2: A Joint Training Method for Dense Passage Retrieval and Passage Re-ranking}. In \bibinfo{booktitle}{\emph{Proceedings of the 2021 Conference on Empirical Methods in Natural Language Processing}}, \bibfield{editor}{\bibinfo{person}{Marie-Francine Moens}, \bibinfo{person}{Xuanjing Huang}, \bibinfo{person}{Lucia Specia}, {and} \bibinfo{person}{Scott Wen-tau Yih}} (Eds.). \bibinfo{publisher}{Association for Computational Linguistics}, \bibinfo{address}{Online and Punta Cana, Dominican Republic}, \bibinfo{pages}{2825--2835}.
\newblock
\urldef\tempurl%
\url{https://doi.org/10.18653/v1/2021.emnlp-main.224}
\showDOI{\tempurl}


\bibitem[Robertson et~al\mbox{.}(1994)]%
        {BM25}
\bibfield{author}{\bibinfo{person}{Stephen~E. Robertson}, \bibinfo{person}{Steve Walker}, \bibinfo{person}{Susan Jones}, \bibinfo{person}{Micheline Hancock-Beaulieu}, {and} \bibinfo{person}{Mike Gatford}.} \bibinfo{year}{1994}\natexlab{}.
\newblock \showarticletitle{Okapi at TREC-3}. In \bibinfo{booktitle}{\emph{Text Retrieval Conference}}. \bibinfo{publisher}{Proceedings of the Third Text REtrieval Conference (TREC 1994)}, \bibinfo{address}{Gaithersburg, USA}, \bibinfo{pages}{109}.
\newblock
\urldef\tempurl%
\url{https://api.semanticscholar.org/CorpusID:3946054}
\showURL{%
\tempurl}


\bibitem[Sanh et~al\mbox{.}(2020)]%
        {distilbert}
\bibfield{author}{\bibinfo{person}{Victor Sanh}, \bibinfo{person}{Lysandre Debut}, \bibinfo{person}{Julien Chaumond}, {and} \bibinfo{person}{Thomas Wolf}.} \bibinfo{year}{2020}\natexlab{}.
\newblock \bibinfo{title}{DistilBERT, a distilled version of BERT: smaller, faster, cheaper and lighter}.
\newblock
\newblock
\showeprint[arxiv]{1910.01108}~[cs.CL]
\urldef\tempurl%
\url{https://arxiv.org/abs/1910.01108}
\showURL{%
\tempurl}


\bibitem[Santhanam et~al\mbox{.}(2022a)]%
        {PLAID}
\bibfield{author}{\bibinfo{person}{Keshav Santhanam}, \bibinfo{person}{Omar Khattab}, \bibinfo{person}{Christopher Potts}, {and} \bibinfo{person}{Matei Zaharia}.} \bibinfo{year}{2022}\natexlab{a}.
\newblock \showarticletitle{PLAID: An Efficient Engine for Late Interaction Retrieval}. In \bibinfo{booktitle}{\emph{Proceedings of the 31st ACM International Conference on Information \& Knowledge Management}} (Atlanta, GA, USA) \emph{(\bibinfo{series}{CIKM '22})}. \bibinfo{publisher}{Association for Computing Machinery}, \bibinfo{address}{New York, NY, USA}, \bibinfo{pages}{1747–1756}.
\newblock
\showISBNx{9781450392365}
\urldef\tempurl%
\url{https://doi.org/10.1145/3511808.3557325}
\showDOI{\tempurl}


\bibitem[Santhanam et~al\mbox{.}(2022b)]%
        {ColBERTv2}
\bibfield{author}{\bibinfo{person}{Keshav Santhanam}, \bibinfo{person}{Omar Khattab}, \bibinfo{person}{Jon Saad-Falcon}, \bibinfo{person}{Christopher Potts}, {and} \bibinfo{person}{Matei Zaharia}.} \bibinfo{year}{2022}\natexlab{b}.
\newblock \showarticletitle{{C}ol{BERT}v2: Effective and Efficient Retrieval via Lightweight Late Interaction}. In \bibinfo{booktitle}{\emph{Proceedings of the 2022 Conference of the North American Chapter of the Association for Computational Linguistics: Human Language Technologies}}, \bibfield{editor}{\bibinfo{person}{Marine Carpuat}, \bibinfo{person}{Marie-Catherine de~Marneffe}, {and} \bibinfo{person}{Ivan~Vladimir Meza~Ruiz}} (Eds.). \bibinfo{publisher}{Association for Computational Linguistics}, \bibinfo{address}{Seattle, United States}, \bibinfo{pages}{3715--3734}.
\newblock
\urldef\tempurl%
\url{https://doi.org/10.18653/v1/2022.naacl-main.272}
\showDOI{\tempurl}


\bibitem[Thakur et~al\mbox{.}(2021)]%
        {BEIR}
\bibfield{author}{\bibinfo{person}{Nandan Thakur}, \bibinfo{person}{Nils Reimers}, \bibinfo{person}{Andreas Rücklé}, \bibinfo{person}{Abhishek Srivastava}, {and} \bibinfo{person}{Iryna Gurevych}.} \bibinfo{year}{2021}\natexlab{}.
\newblock \bibinfo{title}{BEIR: A Heterogenous Benchmark for Zero-shot Evaluation of Information Retrieval Models}.
\newblock
\newblock
\showeprint[arxiv]{2104.08663}~[cs.IR]
\urldef\tempurl%
\url{https://arxiv.org/abs/2104.08663}
\showURL{%
\tempurl}


\bibitem[Xiong et~al\mbox{.}(2017)]%
        {K-NRM}
\bibfield{author}{\bibinfo{person}{Chenyan Xiong}, \bibinfo{person}{Zhuyun Dai}, \bibinfo{person}{Jamie Callan}, \bibinfo{person}{Zhiyuan Liu}, {and} \bibinfo{person}{Russell Power}.} \bibinfo{year}{2017}\natexlab{}.
\newblock \showarticletitle{End-to-End Neural Ad-hoc Ranking with Kernel Pooling}. In \bibinfo{booktitle}{\emph{Proceedings of the 40th International ACM SIGIR Conference on Research and Development in Information Retrieval}} (Shinjuku, Tokyo, Japan) \emph{(\bibinfo{series}{SIGIR '17})}. \bibinfo{publisher}{Association for Computing Machinery}, \bibinfo{address}{New York, NY, USA}, \bibinfo{pages}{55–64}.
\newblock
\showISBNx{9781450350228}
\urldef\tempurl%
\url{https://doi.org/10.1145/3077136.3080809}
\showDOI{\tempurl}


\bibitem[Zhang et~al\mbox{.}(2022)]%
        {MVR}
\bibfield{author}{\bibinfo{person}{Shunyu Zhang}, \bibinfo{person}{Yaobo Liang}, \bibinfo{person}{Ming Gong}, \bibinfo{person}{Daxin Jiang}, {and} \bibinfo{person}{Nan Duan}.} \bibinfo{year}{2022}\natexlab{}.
\newblock \showarticletitle{Multi-View Document Representation Learning for Open-Domain Dense Retrieval}. In \bibinfo{booktitle}{\emph{Proceedings of the 60th Annual Meeting of the Association for Computational Linguistics (Volume 1: Long Papers)}}, \bibfield{editor}{\bibinfo{person}{Smaranda Muresan}, \bibinfo{person}{Preslav Nakov}, {and} \bibinfo{person}{Aline Villavicencio}} (Eds.). \bibinfo{publisher}{Association for Computational Linguistics}, \bibinfo{address}{Dublin, Ireland}, \bibinfo{pages}{5990--6000}.
\newblock
\urldef\tempurl%
\url{https://doi.org/10.18653/v1/2022.acl-long.414}
\showDOI{\tempurl}


\bibitem[Zong and Piwowarski(2023)]%
        {Xpmir}
\bibfield{author}{\bibinfo{person}{Yuxuan Zong} {and} \bibinfo{person}{Benjamin Piwowarski}.} \bibinfo{year}{2023}\natexlab{}.
\newblock \showarticletitle{XpmIR: A Modular Library for Learning to Rank and Neural IR Experiments}. In \bibinfo{booktitle}{\emph{Proceedings of the 46th International ACM SIGIR Conference on Research and Development in Information Retrieval}} (Taipei, Taiwan) \emph{(\bibinfo{series}{SIGIR '23})}. \bibinfo{publisher}{Association for Computing Machinery}, \bibinfo{address}{New York, NY, USA}, \bibinfo{pages}{3185–3189}.
\newblock
\showISBNx{9781450394086}
\urldef\tempurl%
\url{https://doi.org/10.1145/3539618.3591818}
\showDOI{\tempurl}


\end{thebibliography}

\end{document}